\newcommand{\defref}[1]{\autoref{#1}}
\newcommand{\lemref}[1]{\autoref{#1}}
 \definecolor{chameleond}{HTML}{4E9A06}
 \definecolor{skyblued}{HTML}{204A87}
 \definecolor{bluue}{HTML}{0047AB}
\newcommand{\confOrReport}[2]{#1}
\renewcommand{\confOrReport}[2]{#2}
\newcommand{\asym}{f}                
\newcommand{\sym}[1]{\mathsf{#1}}    
\newcommand{\FZero}{\sym{Zero}}
\newcommand{\FS}{\sym{S}}
\newcommand{\FNil}{\sym{Nil}}
\newcommand{\FTree}{\sym{Tree}}
\newcommand{\FCons}{\sym{Cons}}
\newcommand{\Fplus}{\sym{plus}}
\newcommand{\Fsize}{\sym{size}}
\newcommand{\Fdoubles}{\sym{doubles}}
\newcommand{\FCom}{\sym{Com}}
\newcommand{\FComPar}{\sym{ComPar}}
\newcommand{\Fa}{\sym{a}}
\newcommand{\Fb}{\sym{b}}
\newcommand{\Fc}{\sym{c}}
\newcommand{\Fd}{\sym{d}}
\newcommand{\Ff}{\sym{f}}
\newcommand{\Fg}{\sym{g}}
\newcommand{\Fminus}{\sym{-}}
\newcommand{\Fs}{\sym{s}}
\newcommand{\Fzero}{\sym{0}}
\newcommand{\Fleq}{\sym{leq}}
\newcommand{\Ftrue}{\sym{true}}
\newcommand{\Ffalse}{\sym{false}}
\newcommand{\Fmod}{\sym{mod}}
\newcommand{\Fif}{\sym{if}}
\newcommand{\tup}[1]{#1^{\sharp}}
\newcommand{\FPLUS}{\tup{\Fplus}}
\newcommand{\FSIZE}{\tup{\Fsize}}
\newcommand{\FDOUBLES}{\tup{\Fdoubles}}
\newcommand{\FD}{\tup{\Fd}}
\newcommand{\hole}{\Box}   
\newcommand{\sbot}{\alpha} 
\newcommand{\sDT}{\beta}  
\newcommand{\Dom}{\mathit{Dom}}
\newcommand{\Pol}{{\mathcal{P}ol}}
\newcommand{\Signature}{\ensuremath{\Sigma}}      
\newcommand{\DefSyms}{\ensuremath{\Signature_d}}      
\newcommand{\ConSyms}{\ensuremath{\Signature_c}}      
\newcommand{\symsof}[1]{\Signature^{#1}}
\DeclareMathOperator{\Dh}{dh}                     
\newcommand{\Cplx}{\mathit{Cplx}}                 
\newcommand{\rt}{\mathrm{root}}                   
\newcommand{\tsize}[1]{\ensuremath{\lvert#1\rvert}}
\newcommand{\irc}[1]{\mathrm{irc}_{#1}}
\newcommand{\ircR}{\irc{\RR}}
\newcommand{\pirc}[1]{\mathrm{irc}^{\parallel}_{#1}}
\renewcommand{\pirc}[1]{\mathrm{pirc}_{#1}}
\newcommand{\pircR}{\pirc{\RR}}
\newcommand{\maxanf}[1]{#1\!\Downarrow} 
\newcommand{\detup}{\delta} 
\newcommand{\detupTRS}{\SSS/((\DD\setminus\SSS)\cup\RR)}
\newcommand{\msdc}[2]{\msdcSym(#1;#2)}
\newcommand{\msdcSym}{\mathit{MSDC}}
\newcommand{\lmsdc}{\langle} 
\newcommand{\rmsdc}{\rangle} 
\renewcommand{\msdc}[2]{\lmsdc #2 \rmsdc \in \msdcSym(#1)}
\newcommand{\TT}{\mathcal{T}}
\newcommand{\RR}{\mathcal{R}}
\newcommand{\DD}{\mathcal{D}}
\newcommand{\SSS}{\mathcal{S}}
\newcommand{\RRA}{{\RR_1}}  
\newcommand{\RRB}{{\RR_2}}  
\newcommand{\OO}{\mathcal{O}}
\newcommand{\Oh}{\OO}
\newcommand{\VV}{\mathcal{V}}
\newcommand{\BasicTerms}{\TT_{\mathrm{basic}}}    
\newcommand{\SharpTerms}{\TT^\sharp}              
\newcommand{\Pos}{\mathcal{P}\!\mathit{os}}       
\newcommand{\PosDef}{\Pos_d}                      
\newcommand{\DT}{\mathit{DT}}                     
\newcommand{\DTpar}{\mathit{DT^{\parallel}}}      
\renewcommand{\DTpar}{\mathit{PDT}}
\newcommand{\ito}{
  \mathrel{\smash{\stackrel{\raisebox{2pt}{\scriptsize $\mathsf{i}\:$}}%
      {\smash{\rightarrow}}}_{\RR}}
}
\newcommand{\itodetup}{
  \mathrel{\smash{\stackrel{\raisebox{2pt}{\scriptsize $\mathsf{i}\:$}}%
      {\smash{\rightarrow}}}_{\detup(\langle \DD, \SSS, \RR \rangle)}}
}
\renewcommand{\itodetup}{
  \mathrel{\smash{\stackrel{\raisebox{2pt}{\scriptsize $\mathsf{i}\:$}}%
      {\smash{\rightarrow}}}_{\SSS/((\DD\setminus\SSS)\cup\RR)}}
}
\renewcommand{\itodetup}{\someito{\SSS/((\DD\setminus\SSS)\cup\RR)}}
\newcommand{\itosdetup}{
  \mathrel{\smash{\stackrel{\raisebox{2pt}{\scriptsize $\mathsf{i}\:$}}%
      {\smash{\rightarrow^*}}}_{\SSS/((\DD\setminus\SSS)\cup\RR)}}
}
\renewcommand{\itosdetup}{\someitos{\SSS/((\DD\setminus\SSS)\cup\RR)}}
\newcommand{\itop}{
  \mathrel{\smash{\stackrel{\raisebox{2pt}{\scriptsize $\mathsf{i}\:\:\:$}}%
      {\smash{\rightarrow^+}}}_{\RR}}
}
\renewcommand{\itop}{\someitop{\RR}}
\newcommand{\itos}{
  \mathrel{\smash{\stackrel{\raisebox{2pt}{\scriptsize $\mathsf{i}\:$}}%
      {\smash{\rightarrow^*}}}_{\RR}}
}
\renewcommand{\itos}{\someitos{\RR}}
\newcommand{\pito}{
  \mathrel{\smash{\stackrel{\raisebox{2pt}{\scriptsize $\:\:\mathsf{i}\:$}}%
      {\smash{\longrightarrow\hspace*{-16pt}{\parallel}\hspace*{8pt}}}}_{\RR}}
}
\newcommand{\pitos}{
  \mathrel{\smash{\stackrel{\raisebox{2pt}{\scriptsize $\:\:\mathsf{i}\:$}}%
      {\smash{\longrightarrow\hspace*{-16pt}{\parallel}\hspace*{8pt}}}}^*_{\RR}}
}
\newcommand{\pitosbelow}{
  \mathrel{\smash{\stackrel{\raisebox{2pt}{\scriptsize $\:\:\mathsf{i}$}}%
      {\smash{\longrightarrow\hspace*{-16pt}{\parallel}\hspace*{8pt}}}}^*_{\RR,>\varepsilon}}
}
\newcommand{\someto}[1]{
  \mathrel{\smash{\rightarrow}_{#1}}
}
\newcommand{\someito}[1]{
  \mathrel{\smash{\stackrel{\raisebox{2pt}{\scriptsize $\mathsf{i}\:$}}%
      {\smash{\rightarrow}}}_{#1}}
}
\newcommand{\someitos}[1]{
  \mathrel{\smash{\stackrel{\raisebox{2pt}{\scriptsize $\mathsf{i}\:$}}%
      {\smash{\rightarrow}}}_{#1}^{*}}
}
\newcommand{\someitop}[1]{
  \mathrel{\smash{\stackrel{\raisebox{2pt}{\scriptsize $\mathsf{i}\:$}}%
      {\smash{\rightarrow}}}_{#1}^{+}}
}
\newcommand{\someitom}[1]{
  \mathrel{\smash{\stackrel{\raisebox{2pt}{\scriptsize $\mathsf{i}\:$}}%
      {\smash{\rightarrow}}}_{#1}^{m}}
}
\newcommand{\aprove}{\textsc{AProVE}\xspace}
\newcommand{\tct}{\textsc{TcT}\xspace}
\newcommand{\starexec}{\textsc{StarExec}\xspace}
\newcommand{\cofloco}{\textsc{CoFloCo}\xspace}
\newcommand{\raml}{\textsc{RAML}\xspace}
\newcommand{\nat}{\mathbb{N}}
\title{Analysing Parallel Complexity of Term Rewriting\thanks{
  This work was partially funded by the French National Agency of Research in
  the CODAS Project (ANR-17-CE23-0004-01).}}
\author{Thaïs Baudon\inst{1} \and Carsten Fuhs\inst{2} \and Laure Gonnord\inst{3,1}}
\institute{LIP (UMR CNRS/ENS Lyon/UCB Lyon1/INRIA), Lyon, France
 \and
Birkbeck, University of London, United Kingdom
 \and
LCIS (UGA/Grenoble INP/Ésisar), Valence, France}
\begin{document}
\setlength\abovedisplayshortskip{2pt}
\setlength\belowdisplayshortskip{2pt}
\setlength\abovedisplayskip{3pt}
\setlength\belowdisplayskip{3pt}
\maketitle

\begin{abstract}
  We revisit
  parallel-innermost term rewriting as a model of parallel computation
  on inductive data structures and provide a corresponding notion of
  runtime complexity parametric in the size of the start term.  We
  propose automatic techniques to derive both upper and lower bounds
  on parallel complexity of rewriting that enable a direct reuse of
  existing techniques for sequential complexity.
  The applicability and the precision of the method are demonstrated by the relatively light effort in extending the program analysis tool \aprove and by experiments on numerous benchmarks from the literature.

\end{abstract}

\section{Introduction}
\label{sec:intro}
Automated inference of complexity bounds for parallel computation has
seen a surge of attention in recent years \cite{BaillotPiESOP21,BaillotPiCONCUR21,GallagherLOPSTR19,AlbertTOCL18,HoffmannESOP15,HoffmannICFP18}. While
techniques and tools for a variety of computational models have been
introduced, so far there does not seem to be any paper in this area for
complexity of \emph{term rewriting} with
parallel
evaluation
strategies.
This paper addresses this gap in the
\begin{wrapfigure}[7]{r}{7.4cm}
\vspace*{-6.7ex}
\begin{lstlisting}[language=Rust,style=colouredRust]
fn size(&self) -> int {
 match self {
  &Tree::Node { v, ref left, ref right }
   => left.size() + right.size() + 1,
  &Tree::Empty => 0 , }    } 
\end{lstlisting}
\vspace*{-2ex}
\caption[Size of Tree in Rust]{Tree size computation in Rust}
\label{lst:rustsize}
\end{wrapfigure}
literature. We consider
term rewrite systems (TRSs) as \emph{intermediate representation} for
programs with \emph{pattern-matching} operating on \emph{algebraic data
  types}
like the one depicted in \autoref{lst:rustsize}.

In this particular example, the recursive calls to
\lstinline[language=Rust,style=colouredRust]{left.size()} and
\lstinline[language=Rust,style=colouredRust]{right.size()} can be done
in parallel.
Building on previous work on parallel-innermost
rewriting~\cite{parallelRewriting,innermostOrdering}, and first ideas about parallel
complexity~\cite{wst16trs}, we propose a new notion of Parallel
Dependency Tuples that captures such a behaviour, and methods to
compute both upper and lower \emph{parallel complexity bounds}. 

Bounds on parallel complexity can provide insights about the
potentiality of parallelisation: if sequential and parallel complexity
of a function
(asymptotically) coincide, this information can be useful for a
parallelising compiler to refrain from parallelising the evaluation of
this function.
Moreover, evaluation of TRSs (as a simple
functional programming language) in massively
parallel settings such as GPUs is currently a topic of active research
\cite{gpu_trs}. In this context, a static analysis of parallel complexity can be
helpful to determine whether
to rewrite on a
(fast, but not very parallel) CPU or on a (slower, but massively parallel) GPU.

A preliminary version of this work with an initial notion of parallel
complexity was presented in
an informal
extended abstract
\cite{wst21trs}.
We now propose a
more formal version accompanied by extensions, proofs,
implementation, experiments, and related work.
\autoref{sec:irc} recalls term rewriting
and Dependency Tuples \cite{DependencyTuple}
as the basis of our approach.
In \autoref{sec:para_complex}, we
introduce a notion of runtime complexity for parallel-innermost
rewriting, and we harness the existing Dependency Tuple framework
to compute asymptotic upper bounds on this complexity.
In \autoref{sec:dt_to_irc}, we provide a transformation to innermost
term rewriting that lets any tool for
(sequential) innermost runtime complexity be reused
to find upper bounds for parallel-innermost runtime complexity and,
for
confluent parallel-innermost rewriting, also
lower bounds.
\autoref{sec:expe} gives experimental
evidence of the practicality of our method on a large standard benchmark set.
We discuss related work
in~\autoref{sec:related}.
\confOrReport{Our extended authors' accepted manuscript \cite{versionArxiv}
of this paper
under Creative Commons CC-BY licence additionally
has full proofs of our theorems.}{This technical report provides
full proofs in the appendix. It is an
extended authors' accepted manuscript
for our LOPSTR 2022 paper \cite{lopstr2022}.}

\section{Term Rewriting and Innermost Runtime Complexity}
\label{sec:irc}

We assume basic familiarity with term rewriting (see, e.g.,
\cite{BaaderNipkow}) and recall standard definitions to fix notation.
As customary for analysis of runtime complexity of rewriting,
we consider terms as \emph{tree-shaped} objects,
without sharing of subtrees.

We first define \emph{Term Rewrite Systems} and \emph{Innermost Rewriting}.
$\TT(\Signature, \VV)$ denotes the set of \emph{terms} over a
finite signature $\Signature$ and the set of variables $\VV$.
For a term $t$, its
\emph{size} $\tsize{t}$
is defined by:
(a)
if $t \in \VV$,
$\tsize{t} = 1$;
(b)
if $t = f(t_1, \ldots, t_n)$, then
$\tsize{t} = 1 + \sum_{i = 1}^n \tsize{t_i}$.
The set $\Pos(t)$ of
the \emph{positions} of $t$ is
defined by:
(a) if $t \in \VV$, then $\Pos(t) = \{ \varepsilon \}$,
and (b) if $t = f(t_1,\ldots,t_n)$, then
$\Pos(t) = \{ \varepsilon \} \cup
 \bigcup_{1 \leq i \leq n}\{ i \pi \mid \pi \in \Pos(t_i) \}$.
The position $\varepsilon$ is
the \emph{root position}
of term $t$.
If $t = \asym(t_1,\ldots,t_n)$,
$\rt(t) = \asym$ is the \emph{root symbol} of $t$.
The \emph{(strict) prefix order} $>$ on positions is
the strict partial order given by:
$\tau > \pi$ iff there exists $\pi' \neq \varepsilon$ such that $\pi
\pi' = \tau$.
Two positions $\pi$ and $\tau$ are \emph{parallel} iff neither
$\pi > \tau$ nor $\pi = \tau$ nor $\tau > \pi$ hold.
For $\pi \in \Pos(t)$, $t|_\pi$ is the subterm of $t$
at position $\pi$, and we write $t[s]_\pi$ for the term that results
from $t$ by replacing the subterm $t|_\pi$ at position $\pi$ by the term $s$.

A substitution $\sigma$ is a mapping from $\VV$ to
$\TT(\Signature, \VV)$ with finite domain
$\Dom(\sigma) = \{ x \in \VV \mid \sigma(x) \neq x \}$.
We write
$\{x_1 \mapsto t_1; \ldots; x_n \mapsto t_n\}$ for
a substitution $\sigma$ with $\sigma(x_i) = t_i$ for
$1 \leq i \leq n$ and $\sigma(x) = x$ for all
other $x \in \VV$.
We extend substitutions to terms by
$\sigma(f(t_1,\ldots,f_n)) = f(\sigma(t_1),\ldots,\sigma(t_n))$.
We may write $t\sigma$ for $\sigma(t)$.

For a term $t$, $\VV(t)$ is the set of variables
in
$t$.
A \emph{term rewrite system (TRS)} $\RR$ is a set of rules
   $\{ \ell_1 \to r_1,  \ldots, \ell_n \to r_n \}$
  with
    $\ell_i, r_i \in \TT(\Signature, \VV)$,
    $\ell_i \not\in \VV$,
  and $\VV(r_i) \subseteq \VV(\ell_i)$ for all $1 \leq i \leq n$.
The \emph{rewrite relation} of $\RR$ is
$s \to_\RR t$ iff
  there are
   a rule $\ell \to r \in \RR$,
   a position $\pi \in \Pos(s)$,
   and a substitution $\sigma$
  such that
   $s = s[\ell\sigma]_\pi$ and
   $t = s[r\sigma]_\pi$.
  Here, $\sigma$ is called the \emph{matcher} and the term $\ell\sigma$
  the \emph{redex} of the rewrite step.
  If  no proper subterm of  $\ell\sigma$ is
  a possible redex,
  $\ell\sigma$ is an \emph{innermost redex}, and the rewrite step
   is an \emph{innermost rewrite step}, denoted by $s \ito t$.

 $\DefSyms^{\RR}=\{ f \mid f(\ell_1,\ldots,\ell_n) \to r \in \RR \}$ and
 $\ConSyms^{\RR}= \Signature \setminus \DefSyms^{\RR}$
 are the \emph{defined}  and \emph{constructor} symbols of $\RR$.
 We may
 also just write $\DefSyms$ and $\ConSyms$.
 The set of positions with defined symbols of $t$ is $\PosDef(t) = \{ \pi \mid \pi \in \Pos(t), \rt(t|_\pi) \in \DefSyms \}$.

 For a relation $\to$, $\to^+$ is its transitive closure
 and $\to^*$ its reflexive-transitive closure. An object $o$ is a \emph{normal
 form} wrt a relation $\to$ iff there is no $o'$ with $o \to o'$.
 A relation $\to$ is \emph{confluent} iff $s \to^* t$ and
 $s \to^* u$ implies that
 there exists an object $v$ with
 $t \to^* v$ and $u \to^* v$.
 A relation $\to$ is \emph{terminating} iff
 there is no infinite sequence $t_0 \to t_1 \to t_2 \to \cdots$.

\begin{example}[$\Fsize$]
\label{ex:size1}
Consider the TRS $\RR$ with the following rules modelling the code of~\autoref{lst:rustsize}.
\[
\begin{array}{rcl@{\hspace*{5ex}}|@{\hspace*{5ex}}rcl}
\Fplus(\FZero, y) &\to& y & \Fsize(\FNil) &\to &\FZero \\
\Fplus(\FS(x), y) &\to& \FS(\Fplus(x, y)) &
\Fsize(\FTree(v, l, r)) &\to& \FS(\Fplus(\Fsize(l), \Fsize(r)))
\end{array}
\]
Here $\DefSyms^{\RR} = \{ \Fplus, \Fsize \}$ and
$\ConSyms^{\RR} = \{ \FZero, \FS, \FNil, \FTree \}$.
We have the following innermost rewrite sequence, where
the used innermost redexes are underlined:
\[
\begin{array}{rl}
&\underline{\Fsize(\FTree(\FZero, \FNil, \FTree(\FZero, \FNil, \FNil)))}\\
\ito & \FS(\Fplus(\underline{\Fsize(\FNil)}, \Fsize(\FTree(\FZero, \FNil, \FNil))))\\
\ito & \FS(\Fplus(\FZero, \underline{\Fsize(\FTree(\FZero, \FNil, \FNil))}))\\
\ito & \FS(\Fplus(\FZero, \FS(\Fplus(\underline{\Fsize(\FNil)}, \Fsize(\FNil)))))\\
\ito & \FS(\Fplus(\FZero, \FS(\Fplus(\FZero, \underline{\Fsize(\FNil)}))))\\
\ito & \FS(\Fplus(\FZero, \FS(\underline{\Fplus(\FZero, \FZero)})))\\
\ito & \FS(\underline{\Fplus(\FZero, \FS(\FZero))})\\
\ito & \FS(\FS(\FZero))
\end{array}
\]
This rewrite sequence uses 7 steps to reach a normal form.

\end{example}

We wish to provide static bounds on the
length of the longest
rewrite sequence from terms of a specific size.
Here we
use innermost evaluation strategies,
which closely correspond to call-by-value strategies used in many
programming languages.
We focus
on rewrite sequences that start with \emph{basic terms},
corresponding to function calls where a function is applied to data
objects. The
resulting
notion of complexity for term
rewriting is known as \emph{innermost runtime complexity}.

\begin{definition}[Innermost Runtime Complexity $\irc{}$ \cite{Hirokawa08IJCAR,DependencyTuple}]
\label{def:rc}
The \emph{derivation height} of a term $t$ wrt\ a relation $\to$ is the
length of the longest sequence of $\to$-steps
from $t$:
$\Dh(t, \to) = \sup \{ e \mid \exists\, t' \in \TT(\Signature, \VV).\;
t \to^e t' \}$ where $\to^e$ is the $e^{\textrm{th}}$
iterate of $\to$.
If $t$ starts an infinite $\to$-sequence, we write $\Dh(t, \to) =
\omega$.
Here, $\omega$ is the smallest infinite ordinal, i.e.,
$\omega > n$ holds for all $n \in \nat$.

A term $f(t_1, \ldots, t_k)$ is \emph{basic (for a TRS $\RR$)}
iff $f\in\DefSyms^{\RR}$ and $t_1, \dots, t_k \in \TT(\ConSyms^{\RR}, \VV)$.
$\BasicTerms^{\RR}$ is the set of basic terms
for a TRS $\RR$.
For $n \in \mathbb{N}$,
  the \emph{innermost runtime complexity} function
  is
  $\ircR(n) = \sup \{ \Dh(t, {\ito}) \mid t \in \BasicTerms^{\RR}, \tsize{t} \leq
n \}$.
For all $P \subseteq \nat \cup \{\omega \}$, $\sup\, P$ is the
least upper bound of $P$, where $\sup\, \emptyset = 0$.

\end{definition}

Many automated techniques are available
\cite{Hirokawa08IJCAR,DependencyTuple,HirokawaMoser14,ava:mos:16,naa:fro:bro:fuh:gie:17,MoserS20}
to analyse $\ircR$.  We build on Dependency Tuples
\cite{DependencyTuple}, originally designed to find upper bounds
for\linebreak (sequential) innermost runtime complexity.  A central
idea is to group all function calls by a rewrite rule \emph{together}
rather than to separate them (as with DPs for proving termination
\cite{DependencyPairs}).  We use \emph{sharp terms} to represent these
function calls.

\begin{definition}[Sharp Terms $\SharpTerms$]
For every $f \in \DefSyms$, we introduce a fresh symbol
$\tup{f}$ of the same arity, called a \emph{sharp symbol}.
For a term $t = f(t_1,\ldots,t_n)$ with
$f \in \DefSyms$, we define $\tup{t} = \tup{f}(t_1,\ldots,t_n)$.
For all other terms $t$, we define $\tup{t} = t$.
$\SharpTerms = \{ \tup{t} \mid t \in \TT(\Signature, \VV),
\rt(t) \in \DefSyms \}$ denotes the set of \emph{sharp terms}.
\end{definition}

To
get an upper bound for sequential complexity, we
``count'' how often
each rewrite rule is used.  The idea is
that when a rule $\ell \to r$ is used,
the cost (i.e., number of rewrite steps for the evaluation)
of the function call to the instance of $\ell$
is 1 + the sum
of the costs of all the function calls in the resulting instance of $r$,
counted separately in some fixed order.
To group $k$ function calls together, we use ``compound symbols''
$\FCom_k$ of arity $k$, which intuitively represent the sum of
the runtimes of their arguments.

\begin{definition}[Dependency Tuple, DT \cite{DependencyTuple}]
\label{def:dt}
A \emph{dependency tuple (DT)} is a rule of the form
$\tup{s} \to \FCom_n(\tup{t}_1,\ldots,\tup{t}_n)$
where $\tup{s}, \tup{t}_1,\ldots,\tup{t}_n \in \SharpTerms$.
Let $\ell \to r$ be a rule with $\PosDef(r) = \{ \pi_1, \ldots, \pi_n \}$
and $\pi_1 \gtrdot \ldots \gtrdot \pi_n$ for a total order $\gtrdot$
(e.g., lexicographic order) on positions.
Then $\DT(\ell \to r) = \tup{\ell} \to \FCom_n(\tup{r|}_{\pi_1},\ldots,\tup{r|}_{\pi_n})$.\footnote{The order $\gtrdot$ must be total to ensure that the function
$\DT$ is well defined wrt the order of the arguments of $\FCom_n$. The
(partial!)\ prefix order $>$ is not sufficient here.
 }
For a TRS $\RR$, let $\DT(\RR) = \{ \DT(\ell \to r) \mid \ell \to r \in \RR \}$.
\end{definition}

\begin{example}
\label{ex:sizeDTs}
For $\RR$ from \autoref{ex:size1}, $\DT(\RR)$ consists of the
following DTs:
\[
\begin{array}{rcl}
\FPLUS(\FZero, y) & \to &\FCom_0 \\
\FPLUS(\FS(x), y) & \to &\FCom_1(\FPLUS(x, y)) \\
\FSIZE(\FNil) & \to &\FCom_0 \\
\FSIZE(\FTree(v, l, r)) &\to&
   \FCom_3(\FSIZE(l), \FSIZE(r), \FPLUS(\Fsize(l), \Fsize(r)))
\end{array}
\]

To represent the complexity of a sharp term for a set of DTs and
a TRS $\RR$,
\emph{chain trees}
are used \cite{DependencyTuple}.
Intuitively, a chain tree for some sharp term is a dependency tree
of the computations involved in evaluating this term.
Each node represents a computation (the DT) on some arguments
(defined by the substitution).

\begin{definition}[Chain Tree, $\Cplx$ \cite{DependencyTuple}]
Let $\DD$ be a set of DTs and $\RR$ be a TRS.
Let $T$ be a (possibly infinite) tree where each node is labelled with
a DT $\tup{q} \to \FCom_n(\tup{w}_1,\ldots,\tup{w}_n)$ from $\DD$ and
a substitution $\nu$, written
$(\tup{q} \to \FCom_n(\tup{w}_1,\ldots,\tup{w}_n) \mid \nu)$.
Let the root node
be labelled with $(\tup{s} \to \FCom_e(\tup{r}_1,\ldots,\tup{r}_e) \mid \sigma)$.
Then
$T$ is a \emph{$(\DD,\RR)$-chain tree for $\tup{s} \sigma$} iff
the following conditions hold for any node of $T$,
where $(\tup{u} \to \FCom_m(\tup{v}_1,\ldots,\tup{v}_m) \mid \mu)$
is the label of the node:
\begin{itemize}
\item $\tup{u}\mu$ is in normal form wrt~$\RR$;
\item if this node has the children 
$(\tup{p}_1 \to \FCom_{m_1}(\ldots) \mid \delta_1), \ldots,
(\tup{p}_k \to \FCom_{m_k}(\ldots) \mid \delta_k)$, then there are
pairwise different $i_1,\ldots,i_k \in \{1,\ldots,m\}$ with
$\tup{v}_{i_j} \mu \itos \tup{p}_j \delta_j$ for all $j \in \{1,\ldots,k\}$.
\end{itemize}

Let $\SSS \subseteq \DD$
and $\tup{s} \in \SharpTerms$. For a chain tree $T$,
$|T|_\SSS \in \mathbb{N} \cup \{\omega\}$
is the number of nodes in
$T$
labelled with a DT from $\SSS$. We define
$\Cplx_{\langle \DD, \SSS, \RR \rangle}(\tup{s}) = \sup \{ |T|_\SSS \mid
T \text{ is a } (\DD,\RR)\text{-chain tree for }
\tup{s} \}$.
For terms $\tup{s}$ without a $(\DD,\RR)$-chain tree,
we define $\Cplx_{\langle \DD, \SSS, \RR \rangle}(\tup{s}) = 0$.
\end{definition}
\begin{example}
    For $\RR$ from \autoref{ex:size1} and $\DD = \DT(\RR)$ from
    \autoref{ex:sizeDTs}, the following is a
    chain tree for the term
    $\FSIZE(\FTree(\FZero, \FNil, \FNil))$:

    \noindent
\begin{tikzpicture}[align=left]
  \node (root) at (2,2) {
    $(\FSIZE(\FTree(v, l, r)) \to
      \FCom_3(\FSIZE(l), \FSIZE(r), \FPLUS(\Fsize(l), \Fsize(r)))$ \\
    \hspace*{48ex} $|~\{v \mapsto \FZero; l \mapsto \FNil; r \mapsto \FNil\})$
  };
  \node
  (right) at (1.4,0) {
    $(\FSIZE(\FNil) \to \FCom_0 \mid \{\})$
  };
  \node
  (left) at (-2.1,0.7) {
    $(\FSIZE(\FNil) \to \FCom_0 \mid \{\})$
  };
  \node
  (plus) at (5.2, -0.7) {
    $(\FPLUS(\FZero, y) \to \FCom_0 \mid \{y \mapsto \FZero\})$
  };
  \draw[->] (root.south) -- (left.north);
  \draw[->] (root.south) -- (right.north);
  \draw[->] (root.south) --
    node[above, sloped] {$\Fsize(\FNil) \itos \FZero$}
    (plus.north);
\end{tikzpicture}

\end{example}

The main correctness statement in the sequential case is the
following:

\begin{theorem}[$\Cplx$ bounds Derivation Height for $\ito$
\cite{DependencyTuple}]
\label{thm:cplxSeq}
Let $\RR$ be a TRS, let $t = f(t_1,\ldots,t_n) \in \TT(\Signature,\VV)$
such that all $t_i$ are in normal form
(this includes all $t \in \BasicTerms^{\RR}$). Then we have $\Dh(t,\ito) \leq
\Cplx_{\langle \DT(\RR), \DT(\RR), \RR \rangle}(\tup{t})$.
If $\ito$ is confluent, then  $\Dh(t,\ito) =
\Cplx_{\langle \DT(\RR), \DT(\RR), \RR \rangle}(\tup{t})$.
\end{theorem}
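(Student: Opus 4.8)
The plan is to prove both halves by a step-for-node correspondence between innermost derivations from $t$ and $(\DT(\RR),\DT(\RR),\RR)$-chain trees for $\tup{t}$. For the inequality $\Dh(t,\ito)\le\Cplx_{\langle\DT(\RR),\DT(\RR),\RR\rangle}(\tup{t})$ I would show, by induction on $e$, that for every term $s=f(s_1,\dots,s_k)$ with $f\in\DefSyms$ and all $s_i$ in normal form, and every $\ito$-derivation of length $e$ from $s$, there is a chain tree for $\tup{s}$ with at least $e$ nodes; the case $e=\omega$ is then obtained by a limit argument. For $e=0$ there is nothing to do. For $e>0$, the first step must occur at the root, since any redex strictly below the root would be a subterm of some normal form $s_i$; hence $s=\ell\sigma\ito r\sigma$ for a rule $\ell\to r\in\RR$ and a matcher $\sigma$. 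We label the root of the chain tree $(\DT(\ell\to r)\mid\sigma)$ with $\DT(\ell\to r)=\tup{\ell}\to\FCom_m(\tup{r|}_{\pi_1},\dots,\tup{r|}_{\pi_m})$ and $\PosDef(r)=\{\pi_1,\dots,\pi_m\}$; the normal-form side condition holds because $\tup{\ell}\sigma=\tup{s}$ has the fresh, hence non-defined, root $\tup{f}$ and normal-form arguments.

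It then remains to spread the other $e-1$ steps over subtrees below this node. Two observations drive this: (i) since $\sigma$ maps every variable of $r$ to a normal form, no redex is ever created strictly inside a $\sigma$-instance, so every redex contracted in the tail derivation lies at or below one of $\pi_1,\dots,\pi_m$; and (ii) since rewriting is innermost, whenever $\pi_i>\pi_j$ in $\PosDef(r)$, the inner call descending from $\tup{r|}_{\pi_i}$ is fully evaluated to a normal form before the enclosing call descending from $\tup{r|}_{\pi_j}$ ever fires. Using these, the $e-1$ tail steps partition into classes, one per argument $j$ of $\FCom_m$: a step is charged to $j$ iff it contracts a redex inside the call history originating from $\tup{r|}_{\pi_j}$ while that call is active. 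The steps charged to $j$ form an $\ito$-derivation of some length $c_j$ from an argument-normalised term $p_j\delta_j$ with $\tup{r|}_{\pi_j}\sigma\itos\tup{p}_j\delta_j$ (the steps that normalised its arguments having been charged to the arguments associated with the deeper positions). For each $j$ with $c_j>0$ the induction hypothesis yields a chain tree for $\tup{p}_j\delta_j$ with at least $c_j$ nodes, which we attach as the $j$-th child; the parent--child conditions are exactly the $\itos$-statements above, and since $\sum_j c_j=e-1$, the whole tree has at least $1+(e-1)=e$ nodes. Taking the supremum over all derivations gives the claimed inequality; the delicate point is making the charging well defined and showing no step is lost, which is routine but needs care about positions moving as the term is rewritten.

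For the equality under confluence of $\ito$ I would run the construction backwards: given a chain tree for $\tup{t}$ with $N$ nodes, extract an $\ito$-derivation from $t$ of length at least $N$ (with $N=\omega$ handled by a K\"onig/compactness argument turning an infinite tree, or an unbounded family of finite trees, into an infinite derivation). The root node $(\DT(\ell\to r)\mid\sigma)$ with $\tup{\ell}\sigma=\tup{t}$ forces $\ell\sigma=t$ and the first step $t\ito r\sigma$; each child's subtree is then recursively turned into an $\ito$-derivation and spliced into the matching subterm of $r\sigma$, interleaving the rewrite steps demanded by the parent--child conditions. I expect the main obstacle to be precisely here. A subterm of $r\sigma$ at a non-maximal defined position is constrained by several children simultaneously --- its inner calls by the children for the deeper positions, its own call by its own child --- so the several $\itos$-derivations the chain tree attaches to these children must be merged into a single innermost rewrite sequence; this can fail when $\ito$ is not confluent, since a chain tree may attach mutually incompatible evaluations of the same shared subterm and thereby over-count. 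Confluence removes this obstruction: it forces unique normal forms, so all the $\itos$-derivations agree on the intermediate results of shared sub-computations and can be interleaved consistently. The $\omega$-cases then match up: an infinite $\ito$-derivation from $t$ produces chain trees of unbounded size, so $\Cplx=\omega$, and conversely, under confluence, an infinite chain tree is realised as an infinite $\ito$-derivation, so $\Dh(t,\ito)=\omega$.
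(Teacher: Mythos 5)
Your proposal is correct in outline and reproduces the standard soundness argument for Dependency Tuples, but it is organised differently from the proof this paper relies on. Note first that the paper does not reprove \autoref{thm:cplxSeq}: the theorem is imported from \cite{DependencyTuple}, and the appendix instead proves the parallel analogue \autoref{thm:cplxPar}, whose structure is the natural template here. There the work is front-loaded into a decomposition lemma (\autoref{lem:nested}) bounding $\Dh(u,\pito)$ via the derivation heights of the maximal argument normal forms $\maxanf{u|_{\pi}}$ at defined positions; the theorem then follows by induction on $\Dh(t,\cdot)$, building the chain tree top-down, and confluence enters only to turn the lemma's inequality into an equality (uniqueness of $\maxanf{t}$), so both halves of the statement come out of a single induction. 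You instead induct on the length of a concrete derivation and charge each tail step to a defined position of the applied right-hand side --- morally the same content as the decomposition lemma, unpacked into explicit bookkeeping that you rightly flag as the delicate part --- and you then prove the converse inequality by a separate backwards simulation turning a chain tree into a derivation. Your diagnosis of where confluence is needed in that second construction is the right one: a chain tree may justify children sitting at nested positions by incompatible evaluations of a shared subterm, and unique normal forms are exactly what lets the several $\itos$-justifications be merged into one innermost rewrite sequence. The trade-off is that the lemma-based route localises the position-tracking in a reusable statement and yields the equality with no extra construction, whereas your route makes the operational content and the precise role of confluence more transparent, at the cost of carrying out the charging argument and a second simulation by hand.
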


For automated complexity analysis with DTs, the following notion of
\emph{DT problems} is used as a characterisation of DTs
that we reduce in incremental proof steps
to a trivially solved problem.

\begin{definition}[DT Problem, Complexity of DT Problem \cite{DependencyTuple}]
\label{def:dt_problem}
Let $\RR$ be a TRS, $\DD$ be a set of DTs, $\SSS \subseteq \DD$.
Then $\langle \DD, \SSS, \RR \rangle$ is a DT problem.
Its complexity function is
$\irc{\langle \DD, \SSS, \RR \rangle}(n) =
\sup \{ \Cplx_{\langle \DD, \SSS, \RR \rangle}(\tup{t}) \mid
        t \in \BasicTerms^\RR, |t| \leq n \}$.
The DT problem
$\langle \DT(\RR), \DT(\RR), \RR \rangle$ is called the
\emph{canonical DT problem} for $\RR$.
\end{definition}

For a DT problem $\langle \DD, \SSS, \RR \rangle$, the set $\DD$
contains all DTs that can be used in chain trees.
$\SSS$ contains the DTs whose complexity remains to be
analysed. $\RR$ contains the rewrite rules for evaluating
the arguments of DTs. Here we focus on simplifying $\SSS$ (thus $\DD$ and $\RR$ are fixed during the process)
but techniques to simplify $\DD$ and $\RR$ are available as well
\cite{DependencyTuple,ava:mos:16}.

\autoref{thm:cplxSeq} implies the following link between
$\ircR$ and $\irc{\langle \DT(\RR), \DT(\RR), \RR \rangle}$:

\begin{theorem}[Complexity Bounds for TRSs via Canonical DT
Problems \cite{DependencyTuple}]
Let $\RR$ be a TRS with canonical DT problem
$\langle \DT(\RR), \DT(\RR), \RR \rangle$.
Then we have $\ircR(n) \leq \irc{\langle \DT(\RR), \DT(\RR), \RR \rangle}(n)$.
If $\ito$ is confluent, we have
$\ircR(n) = \irc{\langle \DT(\RR), \DT(\RR), \RR \rangle}(n)$.
\end{theorem}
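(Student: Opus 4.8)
The plan is to obtain this statement as an immediate corollary of \thmref{thm:cplxSeq}, by comparing the two complexity functions pointwise in $n$ and lifting the comparison through the suprema.

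First, I would fix $n \in \nat$ and observe that both $\ircR(n)$ and $\irc{\langle \DT(\RR), \DT(\RR), \RR \rangle}(n)$ take their suprema over the \emph{same} index set, namely $S_n = \{ t \in \BasicTerms^{\RR} \mid \tsize{t} \leq n \}$. Every $t \in S_n$ is of the form $f(t_1,\ldots,t_k)$ with all $t_i \in \TT(\ConSyms^{\RR},\VV)$ in normal form wrt $\RR$ (a constructor term contains no subterm with a defined root symbol, hence no redex), so $t$ satisfies the hypotheses of \thmref{thm:cplxSeq}. Applying that theorem yields $\Dh(t,\ito) \leq \Cplx_{\langle \DT(\RR), \DT(\RR), \RR \rangle}(\tup{t})$ for each $t \in S_n$.

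Second, I would take the supremum over $t \in S_n$ on both sides. Since $\sup$ is monotone on $\nat \cup \{\omega\}$ under the conventions of \defref{def:rc} (in particular $\omega > m$ for all $m \in \nat$, and $\sup \emptyset = 0$), this gives
$$\ircR(n) \;=\; \sup_{t \in S_n} \Dh(t,\ito) \;\leq\; \sup_{t \in S_n} \Cplx_{\langle \DT(\RR), \DT(\RR), \RR \rangle}(\tup{t}) \;=\; \irc{\langle \DT(\RR), \DT(\RR), \RR \rangle}(n),$$
which is the first claim. If $\ito$ is confluent, the second part of \thmref{thm:cplxSeq} upgrades the pointwise $\leq$ to an equality $\Dh(t,\ito) = \Cplx_{\langle \DT(\RR), \DT(\RR), \RR \rangle}(\tup{t})$ for every $t \in S_n$, so the two suprema above coincide and $\ircR(n) = \irc{\langle \DT(\RR), \DT(\RR), \RR \rangle}(n)$.

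There is no genuine obstacle here: all of the difficulty is already encapsulated in \thmref{thm:cplxSeq}, whose proof manipulates chain trees. The only points that require a little care are that the two complexity functions are indexed over the identical set $S_n$, so that the pointwise bound transfers to the suprema without an index mismatch, and that terms with $\Dh(t,\ito) = \omega$ (equivalently $\Cplx = \omega$) are handled consistently by the $\omega$-conventions fixed in \defref{def:rc}.
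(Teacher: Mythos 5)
Your proposal is correct and matches the paper's own treatment: the paper presents this theorem as an immediate consequence of \thmref{thm:cplxSeq} (a cited result, not reproved in the appendix), obtained exactly by applying the pointwise bound to each basic term of size at most $n$ and passing to the suprema, which range over the same index set. No gap.
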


In practice, the focus
is on
finding
asymptotic bounds  for $\ircR$.
For example, \autoref{ex:size:irc} will show that
for our TRS $\RR$
from \autoref{ex:size1}
we have $\ircR(n) \in \OO(n^2)$.

A DT problem $\langle \DD, \SSS, \RR \rangle$ is said to be \emph{solved}
iff $\SSS = \emptyset$: we always have $\irc{\langle \DD, \emptyset, \RR \rangle}(n) = 0$.
To simplify and finally solve DT problems in an incremental fashion,
complexity analysis techniques called \emph{DT processors} are used. A
DT processor takes a DT problem as input and returns a (hopefully
simpler) DT problem as well as an asymptotic complexity bound as an
output. The largest asymptotic complexity bound returned over this
incremental process is then also an upper bound for $\ircR(n)$
\cite[Corollary 21]{DependencyTuple}.

The
reduction pair processor using polynomial interpretations
\cite{DependencyTuple}
applies a restriction
of polynomial interpretations to $\nat$ \cite{Lankford75} to infer
upper bounds on the number of times that DTs can occur
in a chain tree for terms of size at most $n$.

\begin{definition}[Polynomial Interpretation, CPI]
A \emph{polynomial interpretation} $\Pol$ maps
every $n$-ary function symbol to a polynomial with variables
$x_1,\ldots,x_n$ and coefficients from $\nat$.
$\Pol$ extends
to terms via $\Pol(x)=x$ for
$x \in \VV$ and $\Pol(f(t_1,\ldots,t_n)) =
\Pol(f)(\Pol(t_1),\ldots,\Pol(t_n))$.
$\Pol$ induces an order $\succ_\Pol$\linebreak and a quasi-order
$\succsim_\Pol$ over terms where
$s \succ_\Pol t$ iff $\Pol(s) > \Pol(t)$
and
$s \succsim_\Pol t$ iff $\Pol(s) \geq \Pol(t)$
for all instantiations of variables with natural numbers.

A \emph{complexity polynomial interpretation (CPI)} $\Pol$
is a polynomial interpretation where:
$\Pol(\FCom_n(x_1,\ldots,x_n)) = x_1 + \dots + x_n$, and
for all
$f \in \ConSyms$,
$\Pol(f(x_1,\ldots,x_n)) = a_1\cdot x_1 + \dots + a_n \cdot x_n + b$
for some $a_i \in \{0,1\}$ and $b \in \nat$.
\end{definition}

The restriction for CPIs regarding constructor symbols
enforces that the interpretation of a constructor term $t$ (as an argument
of a term for which a chain tree is constructed) can
exceed its size $\tsize{t}$ only by at most a constant factor.
This is crucial for soundness.
Using a CPI, we can now define and state correctness of the
corresponding reduction pair processor
\cite[Theorem 27]{DependencyTuple}.

\begin{theorem}[Reduction Pair Processor with CPIs
  \cite{DependencyTuple}]
\label{thm:redpair}
Let $\langle \DD, \SSS, \RR \rangle$ be a DT problem,
let $\succsim$ and $\succ$ be induced by a CPI $\Pol$.
Let $k \in \nat$ be the maximal degree of all polynomials
$\Pol(\tup{f})$ for all $f \in \DefSyms$.
Let $\DD \cup \RR \subseteq {\succsim}$.
If $\SSS \cap {\succ} \neq \emptyset$,
the reduction pair processor returns the DT problem
$\langle \DD, \SSS \setminus{\succ}, \RR \rangle$
and the complexity $\OO(n^k)$.
Then the reduction pair processor is sound.
\end{theorem}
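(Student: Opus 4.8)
\medskip
\noindent\textbf{Proof proposal.}
Soundness here amounts to showing that the asymptotic complexity of the input DT problem is bounded by the returned bound plus the complexity of the returned problem, i.e.\ $\irc{\langle \DD, \SSS, \RR \rangle}(n) \le \OO(n^k) + \irc{\langle \DD, \SSS \setminus {\succ}, \RR \rangle}(n)$, so that across the incremental process the largest returned bound dominates $\ircR$. (The hypothesis $\SSS \cap {\succ} \neq \emptyset$ only ensures \emph{progress}, namely $\SSS \setminus {\succ} \subsetneq \SSS$, and plays no role in soundness.) The plan is: fix a basic term $t = f(t_1,\ldots,t_j)$ with $\tsize{t} \le n$ and an arbitrary $(\DD,\RR)$-chain tree $T$ for $\tup t$; prove $|T|_{\SSS} \le \Pol(\tup t) + |T|_{\SSS \setminus {\succ}}$ and $\Pol(\tup t) \in \OO(n^k)$; then take suprema over all such $T$ and $t$. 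For readability I assume $t$ (hence every term occurring in $T$) to be ground; if $t$ has variables, all $\Pol$-inequalities below hold under every $\nat$-instantiation of those variables, since $\succ$ and $\succsim$ quantify over all such instantiations, so one simply instantiates each variable by $1$ throughout.

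\emph{Weight decrease.} To a node of $T$ labelled $(\tup u \to \FCom_m(\tup v_1,\ldots,\tup v_m) \mid \mu)$ I attach the number $\Pol(\tup u \mu) \in \nat$. From $\DD \subseteq {\succsim}$ we get $\Pol(\tup u\mu) \ge \sum_{i=1}^m \Pol(\tup v_i\mu)$, with \emph{strict} inequality if the node's DT lies in ${\succ}$. If this node has children labelled $(\tup p_l \to \ldots \mid \delta_l)$ for $l = 1,\ldots,\kappa$ with pairwise distinct indices $i_1,\ldots,i_\kappa$ and $\tup v_{i_l}\mu \itos \tup p_l\delta_l$, then — since $\nat$-valued polynomial interpretations are weakly monotone and $\RR \subseteq {\succsim}$, so every $\RR$-step is weakly decreasing — we have $\Pol(\tup v_{i_l}\mu) \ge \Pol(\tup p_l\delta_l)$. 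Hence $\Pol(\tup u\mu) \ge \sum_{l=1}^\kappa \Pol(\tup p_l \delta_l)$ (the dropped summands are $\ge 0$), again strictly if the node's DT is in ${\succ}$.

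\emph{Counting.} I claim $|T|_{\SSS \cap {\succ}} \le \Pol(\tup t)$. As $|T|_{\SSS \cap {\succ}}$ is the supremum of $|T'|_{\SSS \cap {\succ}}$ over the finite subtrees $T'$ of $T$ rooted at $T$'s root, it suffices to bound each finite $T'$. By induction over $T'$ one shows: for every node $v$ of $T'$, the number of $(\SSS\cap{\succ})$-labelled nodes in the $T'$-subtree below $v$ (inclusive) is at most $\Pol(\tup u\mu)$ for $v$'s label as above. Base case: a ${\succ}$-labelled leaf contributes $\Pol(\tup u\mu) \ge 1$ (it strictly exceeds $\sum_i \Pol(\tup v_i\mu) \ge 0$), any other leaf contributes $0$. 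Inductive step: this is exactly the inequality of the previous paragraph applied to the children of $v$ present in $T'$ (still valid, being a sub-sum of nonnegative terms), combined with $\Pol(\tup u\mu) \ge 1 + \sum(\text{children's bounds})$ when $v$ is a ${\succ}$-node. At the root this yields $|T'|_{\SSS\cap{\succ}} \le \Pol(\tup t)$. Splitting node labels, $|T|_{\SSS} = |T|_{\SSS\cap{\succ}} + |T|_{\SSS\setminus{\succ}} \le \Pol(\tup t) + |T|_{\SSS\setminus{\succ}}$. Finally, $\tup t = \tup f(t_1,\ldots,t_j)$ with all $t_i \in \TT(\ConSyms,\VV)$; by the CPI shape of constructor symbols (coefficients in $\{0,1\}$ plus a constant) each $\Pol(t_i)$ is at most linear in $\tsize{t_i} \le n$, while $\Pol(\tup f)$ has degree $\le k$, so $\Pol(\tup t) = \Pol(\tup f)(\Pol(t_1),\ldots,\Pol(t_j)) \in \OO(n^k)$. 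Taking suprema over $T$ and then over $t$ gives $\irc{\langle\DD,\SSS,\RR\rangle}(n) \le \OO(n^k) + \irc{\langle\DD,\SSS\setminus{\succ},\RR\rangle}(n)$.

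\emph{Main obstacle.} The delicate part is the counting argument: treating possibly \emph{infinite} chain trees via their finite subtrees, and — crucially — keeping the strict-versus-weak decrease bookkeeping correct so that each ${\succ}$-node is charged exactly once even when it has several children, which is what makes the bound \emph{additive} (total $\Pol(\tup t)$) rather than merely per branch. The remaining ingredients (weak monotonicity of $\nat$-polynomial interpretations along $\RR$-rewriting; the linear bound on $\Pol$ of constructor terms, i.e.\ the soundness-critical CPI restriction flagged in the text) are routine.
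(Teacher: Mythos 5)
The paper does not prove this theorem itself: it is imported verbatim from \cite{DependencyTuple} (Theorem 27 there), so there is no in-paper proof to compare against. Your reconstruction is the standard argument from that reference and is correct: attaching the weight $\Pol(\tup{u}\mu)$ to each chain-tree node, using $\Pol(\FCom_m(\ldots)) = \sum_i x_i$ together with $\DD \cup \RR \subseteq {\succsim}$ and weak monotonicity to get the (strict, for ${\succ}$-nodes) weight decrease along edges, bounding $|T|_{\SSS\cap{\succ}}$ by $\Pol(\tup{t})$ via induction on finite subtrees, and invoking the CPI constructor restriction to get $\Pol(\tup{t}) \in \OO(n^k)$ for basic $t$ of size at most $n$. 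Your handling of the two points that are easy to get wrong --- infinite chain trees via suprema over finite subtrees, and non-ground substitutions via universal quantification over $\nat$-instantiations --- is also sound.
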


\begin{example}[\autoref{ex:sizeDTs} continued]
\label{ex:size:irc}
For our running example,
consider the CPI
$\Pol$
with:
$\Pol(\FPLUS(x_1,x_2)) = \Pol(\Fsize(x_1)) = x_1,\;
\Pol(\FSIZE(x_1)) = 2 x_1 + x_1^2,
\linebreak
\Pol(\Fplus(x_1,x_2)) = x_1 + x_2,
\Pol(\FTree(x_1,x_2,x_3)) = 1 + x_2 + x_3,
\Pol(\FS(x_1)) = 1 + x_1,
\Pol(\FZero) = \Pol(\FNil) = 1$.
$\Pol$ orients all DTs in
$\SSS = \DT(\RR)$ with $\succ$ and all
rules in $\RR$ with $\succsim$.
This proves
$\ircR(n) \in \OO(n^2)$:
since the maximal degree of the CPI for a symbol $\tup{f}$ is 2,
the upper bound of $\OO(n^2)$ follows by
\autoref{thm:redpair}.
\end{example}

\end{example}

\section{Finding Upper Bounds for Parallel Complexity}
\label{sec:para_complex}

In this section we present our main contribution: an application of the
DT framework from innermost runtime complexity to \emph{parallel-innermost rewriting}.

The notion of parallel-innermost rewriting dates back at least
to \cite{parallelRewriting}. Informally, in a parallel-innermost
rewrite step, all innermost redexes are rewritten simultaneously.
This corresponds to executing all function calls in parallel using a
call-by-value strategy on a machine with unbounded parallelism \cite{BlellochFPCA95}.
In the literature \cite{LoopsUnderStrategies}, this strategy is also known as
``max-parallel-innermost rewriting''.

\begin{definition}[Parallel-Innermost Rewriting \cite{innermostOrdering}]
\label{def:pito}
A term $s$ \emph{rewrites innermost in parallel} to $t$ with a TRS $\RR$,
written $s \pito t$, iff $s \itop t$, and either
(a) $s \ito t$ with $s$ an innermost redex, or
(b) $s = f(s_1, \ldots, s_n)$, $t = f(t_1, \ldots, t_n)$, and for all
$1 \leq k \leq n$ either $s_k \pito t_k$ or $s_k = t_k$ is a normal form.
\end{definition}

\begin{example}[\autoref{ex:size1} continued]
\label{ex:pito_sequence}
The TRS $\RR$ from \autoref{ex:size1}
allows the following parallel-innermost rewrite sequence, where
innermost redexes are underlined:
\[
\begin{array}{rl}
&\underline{\Fsize(\FTree(\FZero, \FNil, \FTree(\FZero, \FNil, \FNil)))}\\
\pito & \FS(\Fplus(\underline{\Fsize(\FNil)}, \underline{\Fsize(\FTree(\FZero, \FNil, \FNil))}))\\
\pito & \FS(\Fplus(\FZero, \FS(\Fplus(\underline{\Fsize(\FNil)}, \underline{\Fsize(\FNil)}))))\\
\pito & \FS(\Fplus(\FZero, \FS(\underline{\Fplus(\FZero, \FZero)})))\\
\pito & \FS(\underline{\Fplus(\FZero, \FS(\FZero))})\\
\pito & \FS(\FS(\FZero))
\end{array}
\]

In the second and in the third step, two innermost steps each
happen in parallel (which is not possible with standard
innermost rewriting: $\mathrel{\pito}\ \not\subseteq\ \mathrel{\ito}$).
An innermost rewrite sequence
without parallel evaluation
necessarily needs
two more steps to a normal form from this start term,
as in \autoref{ex:size1}.
\end{example}

Note that for all TRSs $\RR$, $\pito$ is terminating
iff $\ito$ is terminating \cite{innermostOrdering}.
\autoref{ex:pito_sequence}
\linebreak
shows that
such an equivalence does \emph{not} hold for the derivation height
of a term.
The question now is: given a TRS $\RR$, how much of a speed-up might
we get by
a switch
from innermost to parallel-innermost rewriting?
To investigate,
we extend the notion of innermost
runtime complexity to parallel-innermost rewriting.

\begin{definition}[Parallel-Innermost Runtime Complexity $\pirc{}$]
\label{def:pirc}
For $n \in \mathbb{N}$, we define the
  \emph{parallel-innermost runtime complexity} function as
  $\pircR(n) = \sup \{ \Dh(t, {\pito}) \mid t \in \BasicTerms^{\RR}, \tsize{t} \leq
n \}$.
\end{definition}

In the literature on parallel computing
\cite{BlellochFPCA95,HoffmannESOP15,BaillotPiESOP21},
the terms \emph{depth} or
\emph{span} are commonly used for the concept of the runtime
of a function on a machine with unbounded parallelism (``wall time''), corresponding
to the complexity measure of $\pircR$.
In contrast, $\ircR$ would describe the \emph{work} of a function
(``CPU time'').

In the following, given a TRS $\RR$, our goal shall be to infer
(asymptotic) upper bounds for $\pircR$ fully automatically.
Of course, an upper bound for (sequential) $\ircR$
is also an upper bound for $\pircR$.
We will now introduce
techniques to find upper bounds for $\pircR$
that are strictly tighter than these trivial bounds.

To find upper bounds for runtime complexity of parallel-innermost
rewriting, we can \emph{reuse} the notion of DTs
from \autoref{def:dt} for sequential innermost rewriting along with
existing techniques~\cite{DependencyTuple} as illustrated in the following example.

\vspace{-1mm}
\begin{example}
\label{ex:sizePDTs}
In the recursive $\Fsize$-rule, the two calls to $\Fsize(l)$ and
$\Fsize(r)$ happen \emph{in parallel} (they are \emph{structurally
independent})
and take place at \emph{parallel positions} in the term.
Thus, the cost (number of rewrite steps with $\pito$ until a normal form is reached) for these two
calls is not the \emph{sum}, but the \emph{maximum} of their
individual costs.
Regardless of which of these two calls has the higher cost,
we still need to add the cost for the call to $\Fplus$
on the results of the two calls:
$\Fplus$ starts evaluating only after both calls to $\Fsize$ have
finished.
With $\sigma$ as the used matcher for the rule
and with $t \downarrow$ as the (here unique)
normal form resulting from repeatedly rewriting a term $t$
with $\pito$ (the ``result'' of evaluating $t$), we have:
\[
\begin{array}{rl}
&\Dh(\Fsize(\FTree(v, l, r)) \sigma, \pito)\\
=&
1 + \max( \Dh(\Fsize(l) \sigma, \pito), \Dh(\Fsize(r) \sigma, \pito))\\
& \hspace*{38ex}
 {} + \Dh(\Fplus(\Fsize(l) \sigma \!\downarrow, \Fsize(r) \sigma \!\downarrow), \pito)
\end{array}
\]

In the DT setting, we could
introduce a new symbol $\FComPar_k$
that explicitly expresses that its arguments
are evaluated in parallel. This symbol would
then be interpreted as the maximum
of its arguments in an extension of \autoref{thm:redpair}:
\begin{align*}
\FSIZE(\FTree(v, l, r)) & \to
   \FCom_2(\FComPar_2(\FSIZE(l), \FSIZE(r)), \FPLUS(\Fsize(l), \Fsize(r)))
\end{align*}
Although automation of the search for
polynomial interpretations extended by the maximum function is
readily available~\cite{maxpolo},
we would still have to extend the notion of
Dependency Tuples and also adapt all existing techniques
in the Dependency Tuple framework to work with $\FComPar_k$.

This is why we have chosen the following alternative approach,
which is equally powerful on theoretical level and
enables immediate reuse of existing techniques in the DT framework.
Equivalently to the above, we can ``factor in'' the cost of
calling $\Fplus$ into the maximum function:
\[
\begin{array}{rl}
&\Dh(\Fsize(\FTree(v, l, r)) \sigma, \pito)\\
=&
\max( 1 + \Dh(\Fsize(l) \sigma, \pito) + \Dh(\Fplus(\Fsize(l) \sigma \!\downarrow, \Fsize(r) \sigma \!\downarrow), \pito),\\
&\hspace*{5.2ex} 1 + \Dh(\Fsize(r) \sigma, \pito) +
\Dh(\Fplus(\Fsize(l) \sigma \!\downarrow, \Fsize(r) \sigma \!\downarrow), \pito))
\end{array}
\]
Intuitively, this would correspond to evaluating
$\Fplus(\ldots, \ldots)$
twice, in two parallel threads of execution,
which costs the same amount of (wall) time as evaluating
$\Fplus(\ldots, \ldots)$
once.
We can represent this maximum of the execution times of two threads
by introducing \emph{two} DTs for our recursive $\Fsize$-rule:
\[
\begin{array}{rcl}
\FSIZE(\FTree(v, l, r)) & \to &
   \FCom_2(\FSIZE(l), \FPLUS(\Fsize(l), \Fsize(r)))\\
\FSIZE(\FTree(v, l, r)) & \to &
   \FCom_2(\FSIZE(r), \FPLUS(\Fsize(l), \Fsize(r)))
\end{array}
\]
To express the cost of a concrete rewrite sequence, we would
non-deterministically choose the DT that
corresponds to the ``slower thread''.
\end{example}

In other words,
when a rule $\ell \to r$ is used,
the cost of the function call to the instance of $\ell$
is 1 + the sum
of the costs of the function calls in the resulting instance of $r$ \emph{that are in structural dependency with each other}.
The actual cost of the function call to the instance of $\ell$ in a concrete
rewrite sequence is the \emph{maximum} of all the possible costs
caused by such \emph{chains} of structural dependency
(based on the prefix order $>$ on positions of defined
function symbols in $r$).
Thus, \emph{structurally independent} function calls are considered in
separate DTs, whose non-determinism models the parallelism
of these function calls.

The notion of \emph{structural dependency} of function calls is captured by
\autoref{def:structdeps}. Basically, it comes from the fact that a term
cannot be evaluated before all its subterms have been reduced to normal forms
(innermost rewriting/\emph{call by value}).
This induces a ``happens-before'' relation for the computation
\cite{Lamport78}.
\begin{definition}[Structural Dependency, $\msdcSym$]
  \label{def:structdeps}
For positions $\pi_1, \ldots, \pi_k$, we call $\lmsdc \pi_1, \dots, \pi_k \rmsdc$
a \emph{structural dependency chain} for a term $t$ iff
$\pi_1, \ldots, \pi_k \in \PosDef(t)$ and $\pi_1 > \ldots > \pi_k$.
Here $\pi_i$ \emph{structurally depends on} $\pi_j$ in $t$ iff $j < i$.
A structural dependency chain $\lmsdc \pi_1, \dots, \pi_k \rmsdc$
for a term $t$ is \emph{maximal} iff
$k = 0$ and $\PosDef(t) = \emptyset$, or $k > 0$ and
$\forall \pi \in \PosDef(t)\, .\, \pi \ngtr \pi_1 \wedge (\pi_1 > \pi
\Rightarrow \pi \in \{\pi_2, \ldots, \pi_k\})$.
We write $\msdcSym(t)$ for the set of all maximal structural
dependency chains for $t$.
\end{definition}

Note that $\msdcSym(t) \neq \emptyset$ always holds:
if $\PosDef(t) = \emptyset$, then
$\msdcSym(t) = \{ \lmsdc \rmsdc \}$.

\begin{example}
Let $t = \FS(\Fplus(\Fsize(\FNil), \Fplus(\Fsize(x),\FZero)))$.
In our running example,
$t$ has the
 following
structural dependencies:
$\msdcSym(t) = \{ \lmsdc 11, 1 \rmsdc, \lmsdc 121, 12, 1 \rmsdc \}$.
The
chain $\lmsdc 11, 1 \rmsdc$
corresponds to the nesting of
$t|_{11}=\Fsize(\FNil)$ below $t|_{1}=\Fplus(\Fsize(\FNil),
\Fplus(\Fsize(x),\FZero))$, so the evaluation of
$t|_{1}$ will have to wait at least until $t|_{11}$ has been
fully evaluated.

If
$\pi$ structurally depends on $\tau$ in a
term $t$,
neither $t|_\tau$ nor $t|_\pi$ need to \emph{be} a redex.\linebreak
Rather, $t|_\tau$ could be \emph{instantiated} to a
redex and
an instance of $t|_\pi$ could become a redex after
its subterms, including the instance of $t|_\tau$, have been
evaluated.
\end{example}

We thus revisit the notion of DTs, which now embed structural
dependencies in addition to the algorithmic dependencies
already captured in DTs.

\begin{definition}[Parallel Dependency Tuples $\DTpar$, 
Canonical Parallel DT Problem]
For a rewrite rule $\ell \to r$, we define the set of its
\emph{Parallel Dependency Tuples (PDTs)} $\DTpar(\ell \to r)$:
$\DTpar(\ell \to r) = \{ \tup{\ell} \to \FCom_k(\tup{r|}_{\pi_1},\ldots,\tup{r|}_{\pi_k})
 \mid
 \msdc{r}{\pi_1,\ldots, \pi_k} \}$.
For a TRS $\RR$, let $\DTpar(\RR) = \bigcup_{\ell \to r \in \RR}
\DTpar(\ell \to r)$.
The \emph{canonical parallel DT problem} for $\RR$ is
$\langle \DTpar(\RR), \DTpar(\RR), \RR \rangle$.
\end{definition}

\begin{example}
For our recursive $\Fsize$-rule
$\ell \to r$,
we have
$\PosDef(r) = \{ 1, 11, 12 \}$
and
$\msdcSym(r) = \{ \lmsdc 11, 1 \rmsdc, \lmsdc 12, 1 \rmsdc \}$.
With $\mathit{r}|_1 = \Fplus(\Fsize(l), \Fsize(r))$,
$\mathit{r}|_{11} = \Fsize(l)$, and $\mathit{r}|_{12} = \Fsize(r)$,
we get the PDTs
from \autoref{ex:sizePDTs}. For
the rule $\Fsize(\FNil) \to \FZero$,
we have $\msdcSym(\FZero) = \{ \lmsdc \rmsdc \}$,
so we get
$\DTpar(\Fsize(\FNil) \to \FZero) = \{ \FSIZE(\FNil) \to \FCom_0 \}$.
\end{example}

We can now make our main correctness statement:

\begin{theorem}[$\Cplx$ bounds Derivation Height for $\pito$]
\label{thm:cplxPar}
Let $\RR$ be a TRS, let $t = f(t_1,\ldots,t_n) \in \TT(\Signature,\VV)$
such that all $t_i$ are in normal form
(e.g., when $t \in \BasicTerms^{\RR}$). Then we have $\Dh(t,\pito) \leq
\Cplx_{\langle \DTpar(\RR), \DTpar(\RR), \RR \rangle}(\tup{t})$.
If $\pito$ is confluent, then  $\Dh(t,\pito) =
\Cplx_{\langle \DTpar(\RR), \DTpar(\RR), \RR \rangle}(\tup{t})$.\footnote{
The proof uses the confluence of $\RR$  as a sufficient criterion for \emph{unique normal forms}.}
\end{theorem}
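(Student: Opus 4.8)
The plan is to reduce \autoref{thm:cplxPar} to the already-established sequential statement \autoref{thm:cplxSeq} by relating parallel-innermost derivations to sequential innermost derivations and relating chain trees over $\DTpar(\RR)$ to chain trees over $\DT(\RR)$. First I would prove the upper bound. Given a term $t = f(t_1,\ldots,t_n)$ with all $t_i$ in normal form, I proceed by induction on $\Dh(t,\pito)$ (which is finite iff $\pito$ terminates, and if it is $\omega$ there is nothing to show since $\Cplx$ can take value $\omega$). If $t$ is a normal form, $\Dh(t,\pito) = 0$ and the bound is trivial. Otherwise write the first parallel step $t \pito t'$: this step rewrites the maximal set of innermost redexes of $t$ simultaneously. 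Since $t$ has the shape ``defined symbol applied to normal forms'', $t$ itself is the unique innermost redex, so $t \ito t'$ by a single rule $\ell \to r$ with matcher $\sigma$, and $t' = r\sigma$. The key combinatorial observation, which I would state as a lemma, is that
\[
\Dh(r\sigma,\pito) \;=\; \max_{\langle \pi_1,\ldots,\pi_k\rangle \in \msdcSym(r)} \ \sum_{j=1}^{k} \bigl(\text{cost of the $j$-th call along that chain}\bigr),
\]
i.e. the parallel derivation height of $r\sigma$ decomposes along maximal structural dependency chains exactly as motivated in \autoref{ex:sizePDTs}: calls at parallel positions contribute via $\max$, calls at nested positions via $+$, and a call at an inner position only starts once its outer-context subterms are normalised (this is where ``innermost'' and the happens-before relation of \autoref{def:structdeps} are used). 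Using this, the maximal chain realising the maximum picks out exactly one PDT $\tup{\ell} \to \FCom_k(\tup{r|}_{\pi_1},\ldots,\tup{r|}_{\pi_k}) \in \DTpar(\ell \to r)$, and I build a $(\DTpar(\RR),\RR)$-chain tree for $\tup{t}$ whose root is labelled with that PDT and $\sigma$, and whose $k$ children are obtained recursively: the $j$-th child is a chain tree for the $j$-th call along the chain (after its arguments have been rewritten with $\itos$ to the normal forms they reach under parallel evaluation, which agree with some innermost normal form — unique if $\RR$ is confluent). The induction hypothesis applies to each child since each such call is again of the form ``defined symbol applied to normal forms'' and has strictly smaller parallel derivation height. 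Counting nodes gives $1 + \sum_j \Dh(\cdot,\pito) = \Dh(t,\pito)$, so $\Cplx_{\langle \DTpar(\RR),\DTpar(\RR),\RR\rangle}(\tup{t}) \geq \Dh(t,\pito)$, which is the claimed inequality.

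For the equality under confluence of $\pito$, I would argue the reverse inequality $\Cplx_{\langle \DTpar(\RR),\DTpar(\RR),\RR\rangle}(\tup{t}) \leq \Dh(t,\pito)$ by showing that \emph{every} $(\DTpar(\RR),\RR)$-chain tree $T$ for $\tup{t}$ satisfies $|T| \leq \Dh(t,\pito)$. This is again an induction: the root of $T$ is labelled by some PDT coming from some maximal structural dependency chain of the right-hand side of the applied rule, and each subtree hanging off a child is a chain tree for a sharp term $\tup{p}_j\delta_j$ reachable by $\itos$ from the corresponding $\tup{v_{i_j}}\mu$; by the normal-form condition on chain-tree nodes and confluence (hence unique normal forms, cf.\ the footnote), these reachable terms are exactly the calls that appear along that structural dependency chain in the actual parallel evaluation of $r\sigma$. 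Applying the induction hypothesis to the children and summing, $|T| = 1 + \sum_j |T_j| \leq 1 + \sum_j \Dh(\cdot,\pito) \leq \Dh(t,\pito)$, the last step using the decomposition lemma and that the chosen chain's sum is bounded by the maximum. Here I would lean on confluence precisely to guarantee that the normal forms reached by $\itos$ inside the chain tree coincide with the ones reached by $\pito$, so that ``cost of a call with already-normalised arguments'' is well defined and matches on both sides.

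The main obstacle I anticipate is making the decomposition lemma fully rigorous, in particular the claim that the parallel derivation height of a term distributes as $\max$ over structurally independent subterms and as $+$ (with the outer call happening strictly after the inner ones) over structurally dependent ones. One has to be careful that: (i) parallel-innermost rewriting of a compound term $C[s_1,\ldots,s_m]$ with the $s_i$ at pairwise parallel positions really does take $\max_i \Dh(s_i,\pito)$ steps to normalise all the $s_i$ and only then can a redex appear at the position of $C$ — this needs an interleaving/projection argument showing that the parallel step on $C[\ldots]$ restricts to parallel steps on each $s_i$; (ii) when a redex does appear higher up after the arguments are normalised, evaluating it does not interfere with and is not interfered with by residual work, because after normalisation of proper subterms the only redex is at that higher position (innermost); and (iii) the normal forms plugged back in are the unique ones, which is where confluence of $\pito$ (equivalently of $\RR$, since $\pito$ and $\ito$ have the same normal forms and the same reachability in terms of normal forms) enters for the equality half. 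Once this lemma is in place, both directions are routine inductions of the kind already used for \autoref{thm:cplxSeq}, and indeed the whole proof can be phrased as ``the same argument as in \cite{DependencyTuple} for \autoref{thm:cplxSeq}, with sums over $\PosDef(r)$ replaced by the selection of a single maximal structural dependency chain''.
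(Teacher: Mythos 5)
Your proposal follows essentially the same route as the paper's proof: the ``decomposition lemma'' you isolate is precisely the paper's Lemma on parallel derivation heights of nested subterms (proved there by induction on the term size), and your induction on $\Dh(t,\pito)$, building a chain tree whose root is the PDT corresponding to the maximizing maximal structural dependency chain, is exactly the paper's construction. Two points, however, need repair. First, the case $\Dh(t,\pito)=\omega$ is not ``nothing to show'': since $\omega$ is the top element, $\omega \leq \Cplx_{\langle \DTpar(\RR), \DTpar(\RR), \RR \rangle}(\tup{t})$ forces this value to \emph{be} $\omega$, so you must actually exhibit an infinite chain tree. The paper does this by repeatedly extracting from an infinite parallel reduction a minimal non-terminating subterm $r_i|_{\pi_i}\sigma_i$ of each right-hand-side instance (which must sit at a position in $\PosDef(r_i)$ because matchers instantiate variables by normal forms) and threading these into an infinite path of chain-tree nodes. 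Second, your decomposition lemma cannot be stated as an unconditional equality: without confluence, different parallel reductions of a subterm may reach different argument normal forms, and the worst-case inner reduction need not deliver the normal form on which the outer call is most expensive. The paper therefore works with the \emph{maximal parallel argument normal form} $\maxanf{t}$ (the argument normal form maximizing the residual derivation height), obtains only ``$\leq$'' in general --- which is all the upper bound needs --- and equality only under confluence, where $\maxanf{t}$ is uniquely determined.

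With these two repairs your argument goes through. Your separate induction showing that \emph{every} chain tree is bounded by $\Dh(t,\pito)$ for the converse direction is a somewhat more explicit way of establishing the equality than the paper's remark that all inequalities in its chain become equalities under confluence, but it rests on the same decomposition lemma and does not change the substance of the proof.
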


From \autoref{thm:cplxPar}, the soundness of our approach to parallel
complexity analysis via the DT framework follows analogously to
\cite{DependencyTuple}:

\begin{theorem}[Parallel Complexity Bounds for TRSs via Canonical Parallel DT
Problems]
\label{thm:canonical_pdt_problem}
Let $\RR$ be a TRS with canonical parallel DT problem
$\langle \DTpar(\RR), \DTpar(\RR), \RR \rangle$.
Then we have $\pircR(n) \leq \irc{\langle \DTpar(\RR), \DTpar(\RR), \RR \rangle}(n)$.
If $\pito$ is confluent, we have
$\pircR(n) = \irc{\langle \DTpar(\RR), \DTpar(\RR), \RR \rangle}(n)$.
\end{theorem}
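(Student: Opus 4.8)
The plan is to derive this theorem from \autoref{thm:cplxPar} exactly as the analogous statement for the sequential case (the "Complexity Bounds for TRSs via Canonical DT Problems" theorem) is derived from \autoref{thm:cplxSeq}. First I would unfold the definitions: by \autoref{def:dt_problem} applied to the parallel DT problem, $\irc{\langle \DTpar(\RR), \DTpar(\RR), \RR \rangle}(n) = \sup \{ \Cplx_{\langle \DTpar(\RR), \DTpar(\RR), \RR \rangle}(\tup{t}) \mid t \in \BasicTerms^\RR, |t| \leq n \}$, while by \autoref{def:pirc}, $\pircR(n) = \sup \{ \Dh(t, {\pito}) \mid t \in \BasicTerms^{\RR}, \tsize{t} \leq n \}$. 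So it suffices to compare the two suprema termwise over the same index set $\{ t \in \BasicTerms^\RR \mid |t| \leq n \}$.

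For the inequality $\pircR(n) \leq \irc{\langle \DTpar(\RR), \DTpar(\RR), \RR \rangle}(n)$: every basic term $t = f(t_1,\dots,t_k) \in \BasicTerms^{\RR}$ has all $t_i \in \TT(\ConSyms^\RR, \VV)$, hence all $t_i$ are normal forms wrt $\RR$ (no defined symbol occurs in them, so no redex). Thus \autoref{thm:cplxPar} applies and gives $\Dh(t, \pito) \leq \Cplx_{\langle \DTpar(\RR), \DTpar(\RR), \RR \rangle}(\tup{t})$ for each such $t$. Taking the supremum over all basic $t$ with $|t| \leq n$ on both sides yields the claim; monotonicity of $\sup$ under a termwise inequality over a fixed index set is all that is needed here.

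For the equality under confluence: since $\pito \subseteq \ito \subseteq \to_\RR$, confluence of $\pito$ follows from (and here is assumed alongside) enough structure that \autoref{thm:cplxPar} gives $\Dh(t,\pito) = \Cplx_{\langle \DTpar(\RR), \DTpar(\RR), \RR \rangle}(\tup{t})$ for every basic $t$; taking suprema over $\{t \in \BasicTerms^\RR \mid |t| \leq n\}$ then gives the equality. The only genuine subtlety — and the step I would be most careful about — is making sure the hypotheses of \autoref{thm:cplxPar} are met for \emph{every} element of the index set, i.e. that "$t$ basic" indeed implies "$t = f(t_1,\dots,t_n)$ with all $t_i$ in normal form"; this is immediate from \autoref{def:rc}, which explicitly notes "this includes all $t \in \BasicTerms^{\RR}$". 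The rest is a one-line supremum argument. All the real work — the chain-tree simulation of parallel-innermost rewrite sequences and the handling of structural dependency chains — has already been discharged inside \autoref{thm:cplxPar}, so this theorem is essentially a bookkeeping corollary and I do not anticipate a substantial obstacle.
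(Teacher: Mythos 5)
Your proof is correct and takes essentially the same route as the paper, which gives no separate argument for this theorem but simply observes that it follows from \autoref{thm:cplxPar} by unfolding \autoref{def:dt_problem} and \autoref{def:pirc} and taking suprema over basic terms of size at most $n$ (basic terms having all arguments in normal form). One parenthetical aside is wrong --- ${\pito} \subseteq {\ito}$ does not hold, as the paper itself notes in \autoref{ex:pito_sequence} --- but it plays no role in your argument, since confluence of $\pito$ is assumed directly as the hypothesis under which the equality case of \autoref{thm:cplxPar} applies.
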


This theorem implies that we can reuse arbitrary techniques to find
upper bounds
for \emph{sequential} complexity in the
DT framework also to find upper bounds for \emph{parallel}
complexity, without requiring any modification to the framework.

Thus, via \autoref{thm:redpair},
in particular we can use polynomial interpretations in the
DT framework for our PDTs to get upper bounds for $\pircR$.

\begin{example}[\autoref{ex:sizePDTs} continued]
\label{ex:size_tight_bound}
For our TRS $\RR$ computing the $\Fsize$ function on trees, we get the
set $\DTpar(\RR)$ with the following
PDTs:
\[
\begin{array}{rcl}
\FPLUS(\FZero, y) & \to &  \FCom_0 \\
\FPLUS(\FS(x), y) & \to & \FCom_1(\FPLUS(x, y)) \\
\FSIZE(\FNil) & \to & \FCom_0 \\
\FSIZE(\FTree(v, l, r)) & \to &
   \FCom_2(\FSIZE(l), \FPLUS(\Fsize(l), \Fsize(r))) \\
\FSIZE(\FTree(v, l, r)) & \to &
   \FCom_2(\FSIZE(r), \FPLUS(\Fsize(l), \Fsize(r)))
\end{array}
\]
The
interpretation $\Pol$
from \autoref{ex:size:irc} implies
$\pircR(n) \in \OO(n^2)$.
This bound is tight: consider $\Fsize(t)$ for a comb-shaped tree $t$
where the first argument of $\FTree$ is always $\FZero$ and the
third is always $\FNil$.
The function $\Fplus$, which needs time linear in its first argument,
is called linearly often on data linear in the size of the start term.
Due to the structural dependencies, these calls do not happen in parallel
(so call $k+1$ to $\Fplus$ must wait for call $k$).
\end{example}

\begin{example}
\label{ex:doubles}
Note that $\pircR(n)$ can be asymptotically lower than $\ircR(n)$, for instance for the TRS $\RR$ with the following rules:
\[
\begin{array}{rcl@{\hspace*{5ex}}|@{\hspace*{5ex}}rcl}
\Fdoubles(\FZero) & \to & \FNil &
  \Fd(\FZero) & \to & \FZero \\
\Fdoubles(\FS(x)) & \to & \FCons(\Fd(\FS(x)), \Fdoubles(x)) &
  \Fd(\FS(x)) & \to & \FS(\FS(\Fd(x)))
\end{array}
\]
The upper bound $\ircR(n) \in \mathcal{O}(n^2)$ is tight:
from
$\Fdoubles(\FS(\FS(\ldots\FS(\FZero)\ldots)))$, we get
linearly many calls to the linear-time function $\Fd$ on arguments
of size linear in the start term.
However, the Parallel Dependency Tuples in this example are:
\[
\begin{array}{rcl@{\hspace*{5ex}}|@{\hspace*{5ex}}rcl}
\FDOUBLES(\FZero) & \to & \FCom_0 & \FD(\FZero) & \to & \FCom_0 \\
\FDOUBLES(\FS(x)) & \to & \FCom_1(\FD(\FS(x))) & \FD(\FS(x)) & \to & \FCom_1(\FD(x)) \\
\FDOUBLES(\FS(x)) & \to & \FCom_1(\FDOUBLES(x))
\end{array}
\]
Then the following polynomial 
interpretation, which orients all DTs with $\succ$ and all
rules from $\RR$ with $\succsim$, proves $\pircR(n) \in \mathcal{O}(n)$:
$\Pol(\FDOUBLES(x_1)) = \Pol(\Fd(x_1)) = 2 x_1,
\Pol(\FD(x_1)) = x_1,
\Pol(\Fdoubles(x_1)) = \Pol(\FCons(x_1,x_2)) = \Pol(\FZero) = \Pol(\FNil) = 1,
\Pol(\FS(x_1)) = 1 + x_1$.
\end{example}

Interestingly enough, Parallel Dependency Tuples also allow us to
identify TRSs that have \emph{no} potential for parallelisation
by parallel-innermost rewriting.

\begin{theorem}[Absence of Parallelism by PDTs]
\label{thm:nopar}
Let $\RR$ be a TRS
such that for all rules $\ell \to r \in \RR$,
$|\msdcSym(r)| = 1$.
Then:
(a) $\DTpar(\RR) = \DT(\RR)$;
(b) for all basic terms $t_0$ and rewrite sequences
$t_0 \pito t_1 \pito t_2 \pito \dots$, also
$t_0 \ito  t_1 \ito  t_2 \ito  \dots$ holds (i.e., from basic terms,
$\pito$ and $\ito$ coincide);
(c) $\pirc{\RR}(n) = \irc{\RR}(n)$.
\end{theorem}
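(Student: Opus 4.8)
The plan is to prove the three parts in the order (a), then (b), then (c), since each relies on the previous one. For part (a), I would unfold the definitions of $\DT(\ell \to r)$ and $\DTpar(\ell \to r)$. The key observation is that when $|\msdcSym(r)| = 1$, the unique maximal structural dependency chain $\lmsdc \pi_1, \dots, \pi_k \rmsdc$ must enumerate \emph{all} of $\PosDef(r)$: if some $\pi \in \PosDef(r)$ were missing, then by maximality either $\pi > \pi_1$ (contradicting that $\pi_1$ is the topmost) or $\pi_1 > \pi$ with $\pi \notin \{\pi_2,\dots,\pi_k\}$ (contradicting the maximality condition), and since the positions in a chain are totally ordered by $>$, this forces $\PosDef(r)$ itself to be totally ordered by the prefix order with $\pi_1 \gtrdot \dots \gtrdot \pi_k$ being consistent with any total extension $\gtrdot$ of $>$. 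Hence the single PDT is exactly $\tup{\ell} \to \FCom_k(\tup{r|}_{\pi_1}, \dots, \tup{r|}_{\pi_k}) = \DT(\ell \to r)$, giving $\DTpar(\RR) = \DT(\RR)$.

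For part (b), I would argue that under the hypothesis, whenever $t$ is a basic term or more generally a term reachable from a basic term under $\pito$, the set $\PosDef(t)$ is totally ordered by the prefix order $>$, so $t$ has a \emph{unique} innermost redex position (the $>$-minimal one). The base case is a basic term $f(t_1,\dots,t_n)$: its only defined-symbol position is $\varepsilon$. For the inductive step, if $t$ has unique innermost redex position $\pi$ and $t \pito t'$, then by \autoref{def:pito} the parallel step rewrites exactly at $\pi$ (there is only one innermost redex), so $\pito$ and $\ito$ agree on this step; it then remains to see that $t'$ again has a totally-ordered $\PosDef$. This follows because $t' = t[r\sigma]_\pi$ where $\ell \to r$ is the applied rule: the defined positions of $t'$ are those of $t$ strictly above $\pi$ (still a chain, all comparable to $\pi$), possibly new ones inside $r\sigma$ at positions $\pi\rho$ with $\rho \in \PosDef(r)$ — and these form a chain by part (a)'s observation on $r$ — plus the positions inside the matcher's substituted variable occurrences, which sit in subterms that were in normal form (being subterms of the innermost redex) and hence contribute no defined positions. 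One needs a small argument that the "above-$\pi$" chain and the "inside-$\pi$" chain together remain totally ordered: every above-$\pi$ position is $> \pi$, hence $>$ every position of the form $\pi\rho$. So $\PosDef(t')$ is again a $>$-chain, closing the induction; chaining these single steps yields $t_0 \ito t_1 \ito t_2 \ito \cdots$.

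For part (c), the inequality $\pircR(n) \leq \ircR(n)$ is immediate since $\pito \subseteq \ito^*$ always, giving $\Dh(t,\pito) \leq \Dh(t,\ito)$; conversely, by part (b), any maximal $\ito$-sequence from a basic term $t$ that is built by always contracting \emph{the} (unique) innermost redex is also a $\pito$-sequence of the same length, and since the innermost redex is unique at every term reachable from $t$, \emph{every} $\ito$-step from such a term is forced to be that step, so $\Dh(t,\ito) = \Dh(t,\pito)$ for all basic $t$; taking suprema over basic terms of size $\leq n$ gives $\ircR(n) = \pircR(n)$. I expect the main obstacle to be the bookkeeping in part (b): carefully verifying that after one rewrite step at $\pi$ the new set of defined-symbol positions is still totally ordered by the prefix order — in particular handling the interaction between freshly created defined positions inside the contractum $r\sigma$ and the ones that were already present above $\pi$, and confirming that no defined symbols are introduced via the matcher (which is exactly where innermost-ness, i.e.\ that proper subterms of the redex are normal forms, is used).
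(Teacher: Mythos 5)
Your overall strategy matches the paper's: prove (a) by showing $|\msdcSym(r)|=1$ forces $\PosDef(r)$ to be a $>$-chain, prove (b) by induction along the rewrite sequence showing each $t_i$ admits only one possible reduct, and derive (c) from (b). However, part (b) contains a genuine error. Your induction invariant --- that $\PosDef(t_i)$ is totally ordered by the prefix order --- is \emph{false} for terms reachable from basic terms, and the step where you justify it (``the positions inside the matcher's substituted variable occurrences \ldots\ sit in subterms that were in normal form \ldots\ and hence contribute no defined positions'') is exactly where it breaks: a normal form may well contain defined symbols (a defined symbol need not head a redex), so the matcher can import defined positions into the contractum, and these can land at parallel positions. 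Concretely, take $\RR = \{\, \Fg(x) \to \Fh(\Ff(x)),\ \Fh(x) \to \Fc(x,x),\ \Ff(\Fa) \to \Fa \,\}$ with $\Fc,\Fa,\Fb$ constructors; every right-hand side satisfies $|\msdcSym(r)|=1$, yet from the basic term $\Fg(\Fb)$ we get $\Fg(\Fb) \pito \Fh(\Ff(\Fb)) \pito \Fc(\Ff(\Fb),\Ff(\Fb))$, whose set of defined positions is $\{1,2\}$ --- not a chain. (The theorem is not threatened here, since both copies of $\Ff(\Fb)$ are normal forms and hence not redexes, but your invariant is violated, so your induction does not close.)

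The repair is to weaken the invariant to the one the paper actually uses: \emph{each $t_i$ has at most one innermost redex}. In the induction step one then argues that the only positions of $t_{i+1}$ that can carry a redex now or later are the positions $\tau\pi_1 > \dots > \tau\pi_k$ of the (unique) maximal structural dependency chain of $r$ inside the contractum, together with the defined positions above $\tau$; all of these are pairwise comparable, and among a $>$-chain at most one position can be an \emph{innermost} redex. Defined symbols imported via the matcher head normal forms, remain normal forms forever, and can therefore be ignored --- this is the observation your write-up replaces with the incorrect ``normal forms contribute no defined positions.'' A smaller remark on part (a): your case split ``either $\pi > \pi_1$ or $\pi_1 > \pi$'' omits the case $\pi \parallel \pi_1$, which is precisely the case where uniqueness of the maximal chain (rather than maximality of the single chain) must be invoked to rule out a second chain through $\pi$; the paper is equally terse here, but the parallel case is the one doing the work. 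Part (c) is fine once (b) is repaired.
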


Thus, for TRSs $\RR$ where
\autoref{thm:nopar} applies,
no rewrite rule
can introduce
parallel redexes, and
specific analysis techniques for
$\pircR$
are not needed.

\section{From Parallel DTs to Innermost Rewriting}
\label{sec:dt_to_irc}
As we have seen in the previous section, we can transform a TRS $\RR$ with
parallel-innermost rewrite relation to a DT problem whose
complexity provides an upper bound of $\pircR$ (or,
for
confluent $\pito$, corresponds exactly to $\pircR$).
However,
DTs are only one of many
available techniques
to find bounds for $\ircR$. Other techniques
include, e.g.,
Weak Dependency Pairs \cite{Hirokawa08IJCAR},
usable replacement maps \cite{HirokawaMoser14},
the Combination Framework \cite{ava:mos:16},
a transformation to complexity
problems for integer transition systems \cite{naa:fro:bro:fuh:gie:17},
amortised complexity analysis \cite{MoserS20},
or techniques for finding \emph{lower} bounds \cite{LowerBounds}.
Thus, can we benefit also from other
techniques for (sequential) innermost complexity to analyse parallel
complexity?

In this section, we answer the question in the affirmative, via a
generic transformation from Dependency Tuple problems back to rewrite
systems whose innermost complexity can then be analysed using
arbitrary existing techniques.

We use
\emph{relative rewriting},
which allows for labelling some of the rewrite rules such that
their use does not contribute to the derivation height of a term.
In other words, rewrite steps with these rewrite rules are
``for free'' from the perspective of complexity.
Existing state-of-the-art tools like \aprove~\cite{aprove-tool}
and \tct~\cite{tct} are
able to find bounds on (innermost) runtime complexity of such
rewrite systems.

\begin{definition}[Relative Rewriting]
\label{def:rel}
For two TRSs $\RRA$ and $\RRB$,  $\RRA/\RRB$ is a
 \emph{relative TRS}.
Its \emph{rewrite relation} $\to_{\RRA/\RRB}$ is
  $\to^*_{\RRB} \circ \to_{\RRA} \circ \to^*_{\RRB}$, i.e.,  rewriting with
$\RRB$ is allowed before and after each $\RRA$-step.
We define the \emph{innermost} rewrite relation by
$s \someito{\RRA/\RRB} t$ iff
  $s \to^*_{\RRB} s' \to_{\RRA} s'' \to^*_{\RRB} t$  for some terms $s',
  s''$ such that the
proper subterms of the
redexes of each step with $\to_\RRB$ or $\to_\RRA$ are in normal form
wrt\ $\RRA \cup \RRB$.

The set $\BasicTerms^{\RRA/\RRB}$ of basic terms for a
relative TRS $\RRA/\RRB$ is
$\BasicTerms^{\RRA/\RRB} = \BasicTerms^{\RRA\cup\RRB}$.
The notion of innermost runtime complexity extends to
relative TRSs in the natural way:
$\irc{\RRA/\RRB}(n) =
\sup \{ \Dh(t, {\someito{\RRA/\RRB}}) \mid t \in \BasicTerms^{\RRA/\RRB}, \tsize{t}
\leq n \}$
\end{definition}

The rewrite relation $\someito{\RRA/\RRB}$ is essentially
the same as $\someito{\RRA \cup \RRB}$, but only steps using
rules from $\RRA$ count towards the complexity;
steps
using rules from $\RRB$ have no cost.
This can be useful, e.g., for representing that built-in functions
from programming languages modelled as recursive functions
have constant cost.

\begin{example}
Consider a variant of \autoref{ex:size1} where
$\Fplus(\FS(x), y) \to \FS(\Fplus(x, y))$ is moved
to $\RRB$,
but all other rules are elements of $\RRA$.
Then
$\RRA/\RRB$ would provide a modelling of the $\Fsize$ function that is
closer to the Rust function from \autoref{sec:intro}.
Let $\FS^n(\FZero)$ denote the term obtained by $n$-fold application
of $\FS$ to $\FZero$ (e.g., $\FS^2(\FZero) = \FS(\FS(\FZero))$).
Although $\Dh(\Fplus(\FS^n(\FZero),\FS^m(\FZero)), \someito{\RRA\cup\RRB}) =
n+1$, we would then get $\Dh(\Fplus(\FS^n(\FZero),\FS^m(\FZero)),
\someito{\RRA/\RRB}) = 1$, corresponding to a machine model where the
time of evaluating addition for integers is constant.
\end{example}

Note the similarity of a relative TRS and a Dependency Tuple
problem: only certain rewrite steps count towards the
analysed complexity.
We make use of this observation for the following transformation.

\begin{definition}[Relative TRS for a Dependency Tuple Problem, $\detup$]
\label{def:to_relative}
Let $\langle \DD, \SSS, \RR \rangle$ be a Dependency Tuple problem.
We define the corresponding relative TRS
$\detup(\langle \DD, \SSS, \RR \rangle) =
\SSS/((\DD \setminus \SSS) \cup \RR)$.
\end{definition}

In other words, we omit the information that steps with our
dependency tuples can happen only on top level (possibly below
constructors $\FCom_n$, but above $\someto{\RR}$ steps). (As we shall see in \autoref{thm:lower_rel}, this information can be recovered.)

The following example is taken from the
\emph{Termination Problem Data Base (TPDB)} \cite{tpdb},
a collection of examples used at the
annual \emph{Termination and Complexity Competition (termCOMP)}
\cite{termcomp,termcompWiki} (see also \autoref{sec:expe}):

\begin{example}[TPDB, \texttt{HirokawaMiddeldorp\_04/t002}]
\label{ex:to_relative}
Consider the following TRS $\RR$ from category
\texttt{Innermost\_Runtime\_Complexity}
of the TPDB:
\[
\begin{array}{rcl@{\hspace*{5ex}}|@{\hspace*{5ex}}rcl}
\Fleq(\Fzero, y) & \to & \Ftrue
& \Fif(\Ftrue, x, y) & \to & x\\
\Fleq(\Fs(x), \Fzero) & \to & \Ffalse
& \Fif(\Ffalse, x, y) & \to & y\\
\Fleq(\Fs(x), \Fs(y)) & \to & \Fleq(x, y)
& \Fminus(x, \Fzero) & \to & x\\
\Fmod(\Fzero, y) & \to & \Fzero
& \Fminus(\Fs(x), \Fs(y)) & \to & \Fminus(x, y)\\
\Fmod(\Fs(x), \Fzero) & \to & \Fzero\\
\Fmod(\Fs(x), \Fs(y)) & \to & \multicolumn{4}{l}{\Fif(\Fleq(y, x), \Fmod(\Fminus(\Fs(x), \Fs(y)), \Fs(y)), \Fs(x))
}
\end{array}
\]
This TRS has the following PDTs $\DTpar(\RR)$:
\[
\begin{array}{rcl@{\hspace*{5ex}}|@{\hspace*{5ex}}rcl}
\tup{\Fleq}(\Fzero, y) & \to & \FCom_0
& \tup{\Fif}(\Ftrue, x, y) & \to & \FCom_0\\
\tup{\Fleq}(\Fs(x), \Fzero) & \to & \FCom_0
& \tup{\Fif}(\Ffalse, x, y) & \to & \FCom_0\\
\tup{\Fleq}(\Fs(x), \Fs(y)) & \to & \FCom_1(\tup{\Fleq}(x, y))
& \tup{\Fminus}(x, \Fzero) & \to & \FCom_0\\
\tup{\Fmod}(\Fzero, y) & \to & \FCom_0
& \tup{\Fminus}(\Fs(x), \Fs(y)) & \to & \FCom_1(\tup{\Fminus}(x, y))\\
\tup{\Fmod}(\Fs(x), \Fzero) & \to & \FCom_0\\
\tup{\Fmod}(\Fs(x), \Fs(y)) & \to &
                                    \multicolumn{4}{l}{\FCom_2(\tup{\Fleq}(y,x), \tup{\Fif}(\Fleq(y, x), \Fmod(\Fminus(\Fs(x), \Fs(y)), \Fs(y)), \Fs(x)))}\\
\tup{\Fmod}(\Fs(x), \Fs(y)) & \to &
                                    \multicolumn{4}{l}{\FCom_3(\tup{\Fminus}(\Fs(x), \Fs(y)), \tup{\Fmod}(\Fminus(\Fs(x), \Fs(y)), \Fs(y)),}\\
&& \multicolumn{4}{r}{\tup{\Fif}(\Fleq(y, x), \Fmod(\Fminus(\Fs(x), \Fs(y)), \Fs(y)), \Fs(x)))}
\end{array}
\]

The canonical parallel DT problem is
$\langle \DTpar(\RR), \DTpar(\RR), \RR \rangle$.
We get the relative TRS
$\detup(\langle \DTpar(\RR), \DTpar(\RR), \RR \rangle) = \DTpar(\RR)/\RR$.

\end{example}

\begin{theorem}[Upper Complexity Bounds for $\detup(\langle \DD, \SSS,
\RR \rangle)$ from
\label{thm:upper_rel}
  $\langle \DD, \SSS, \RR \rangle$]
Let $\langle \DD, \SSS, \RR \rangle$ be a
DT problem.
Then (a) for all
$\tup{t} \in \SharpTerms$ with $t \in \BasicTerms^\RR$,
we have $\Cplx_{\hspace*{-1pt}\langle \DD, \SSS, \RR \rangle}(\tup{t})
\,{\leq} \Dh(\tup{t}, \itodetup)$,
and (b) $\irc{\langle \DD, \SSS, \RR \rangle}(n) \,{\leq}
\,\irc{\detupTRS}(n)$.
\end{theorem}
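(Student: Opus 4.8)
The plan is to prove part (a) first and then derive (b) as an immediate consequence by taking suprema over basic terms. For (a), I would proceed by structural induction on a $(\DD,\RR)$-chain tree $T$ for $\tup{t}$ that realises (or approximates, in the infinite case) the value $\Cplx_{\langle \DD, \SSS, \RR \rangle}(\tup{t})$, simulating $T$ by a rewrite sequence with $\itodetup$. The key observation is that a single node of $T$ labelled $(\tup{u} \to \FCom_m(\tup{v}_1,\ldots,\tup{v}_m) \mid \mu)$ corresponds to one $\SSS$- or $(\DD\setminus\SSS)$-step applied to $\tup{u}\mu$: we rewrite $\tup{u}\mu \to \FCom_m(\tup{v}_1\mu,\ldots,\tup{v}_m\mu)$ at the root, this being an innermost step because $\tup{u}\mu$ is in $\RR$-normal form by the chain-tree conditions and its proper subterms are therefore also in $\RR$-normal form (and no compound/sharp symbol is defined). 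Then for each child, the chain-tree condition gives $\tup{v}_{i_j}\mu \itos \tup{p}_j\delta_j$, i.e.\ we can reduce the corresponding argument with $\to^*_\RR$ into position to continue the simulation recursively inside the appropriate argument of the $\FCom_m$ context.

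The bookkeeping step is to show that these pieces compose into a single $\itodetup = \someitos{\SSS/((\DD\setminus\SSS)\cup\RR)}$ derivation in which the number of $\SSS$-labelled nodes of $T$ equals the number of $\SSS$-steps of the derivation, while the $(\DD\setminus\SSS)$-nodes and the $\to_\RR$-reductions of arguments are all absorbed as ``free'' relative steps. This is where one has to be slightly careful: the compound symbols $\FCom_n$ are constructors of the relative TRS, so rewriting strictly below them is fine for innermost rewriting provided the redex's proper subterms are normal; since all substitutions $\mu,\delta_j$ in the chain tree instantiate left-hand sides to $\RR$-normal-form terms and the arguments are reduced by $\to^*_\RR$ only to feed the next node, the innermost side conditions of \defref{def:rel} are met throughout. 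Summing over the tree, $\Dh(\tup{t}, \itodetup) \geq |T|_\SSS$; taking the supremum over all chain trees $T$ for $\tup{t}$ yields $\Dh(\tup{t}, \itodetup) \geq \Cplx_{\langle \DD, \SSS, \RR \rangle}(\tup{t})$, which is (a). Part (b) then follows because $\BasicTerms^{\detupTRS} = \BasicTerms^{\RR}$ (the sharp and compound symbols do not occur in basic terms, and $\DD\setminus\SSS$, $\SSS$ only add rules whose left-hand sides are sharp) and $\tup{t}$ has the same size as $t$ up to the sharp-symbol renaming, so taking $\sup$ over $\{t \in \BasicTerms^\RR \mid |t|\leq n\}$ on both sides of (a) gives the claimed inequality $\irc{\langle \DD, \SSS, \RR \rangle}(n) \leq \irc{\detupTRS}(n)$.

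The main obstacle I expect is the careful handling of the infinite case and of the nondeterministic branching: a chain tree may be infinite or have unbounded $|T|_\SSS$, in which case one must argue that arbitrarily long $\itodetup$-derivations exist, e.g.\ by truncating $T$ to finite subtrees with $\SSS$-node count approaching $\Cplx_{\langle \DD, \SSS, \RR \rangle}(\tup{t})$ and simulating each truncation; and the interleaving of the (independent) subtrees hanging off a given $\FCom_m$ node must be shown not to create spurious non-innermost steps when one argument is being reduced while siblings are not yet normal. Both are routine once the single-node simulation lemma is in place, but they are the points that need explicit care rather than a one-line appeal. I would state the single-node step as an auxiliary lemma (``each chain-tree edge lifts to $\to_{\SSS}$ or $\to_{(\DD\setminus\SSS)}$ preceded/followed by $\to^*_\RR$, respecting innermost side conditions''), prove it directly from the definitions, and then the induction is a short composition argument.
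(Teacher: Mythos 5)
Your proposal is correct and follows essentially the same route as the paper's proof: simulate a chain tree by an $\someitos{\SSS/((\DD\setminus\SSS)\cup\RR)}$-derivation in which each $\SSS$-node becomes a counted root step on an instantiated sharp term (innermost because the chain-tree condition puts $\tup{u}\mu$ in $\RR$-normal form), the $(\DD\setminus\SSS)$- and $\RR$-steps are absorbed as free relative steps, and the infinite case is handled separately; part (b) then follows by taking suprema, noting only that the right-hand supremum ranges over the (super)set $\BasicTerms^{\RR\cup\DD}$ rather than being equal to $\BasicTerms^\RR$, which still yields the required inequality. The paper merely packages the induction slightly differently (on $|T|_\SSS$, first reaching the topmost $\SSS$-nodes in context and then recursing), which matches your single-node-lemma-plus-composition plan.
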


\begin{example}[\autoref{ex:to_relative} continued]
\label{ex:to_relative_upper}
For the relative TRS $\DTpar(\RR)/\RR$
from
\autoref{ex:to_relative},
the
tool \aprove\ uses a transformation to
integer transition systems~\cite{naa:fro:bro:fuh:gie:17} followed by an application of the
complexity analysis tool \cofloco~\cite{CoFloCo,CoFloCoFM16}
to find
a bound $\irc{\DTpar(\RR)/\RR}(n) \in \OO(n)$ and
to deduce the bound $\pirc{\RR}(n) \in \OO(n)$ for the original TRS $\RR$
from the TPDB. In contrast, using the techniques of~\autoref{sec:para_complex}
without the transformation to a relative
TRS from \autoref{def:to_relative}, \aprove\ finds only a bound $\pirc{\RR}(n) \in \OO(n^2)$.
\end{example}

Intriguingly, we can use our transformation
from \autoref{def:to_relative} not only for
finding upper bounds, but also for \emph{lower} bounds on $\pirc{\RR}$.

\begin{theorem}[Lower Complexity Bounds for $\detup(\langle \DD, \SSS, \RR \rangle)$ from
  $\langle \DD, \SSS, \RR \rangle$]
\label{thm:lower_rel}
Let $\langle \DD, \SSS, \RR \rangle$ be a
DT
problem.
Then (a) there is a type assignment
s.t.\ for all\linebreak
$\ell \to r \in \DD \cup \RR$, $\ell$ and $r$ get the same type, and
for all well-typed $t \in \BasicTerms^{\DD \cup \RR}$,
$\Cplx_{\hspace*{-0.5ex}\langle \DD, \SSS, \RR \rangle}(\tup{t}) \geq
\Dh(t, \itodetup)$, and
(b) $\irc{\hspace*{-1pt}\langle \DD, \SSS, \RR \rangle}(n) \geq
\irc{\detupTRS}(n)$.
\end{theorem}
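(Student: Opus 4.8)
The plan is to prove (a) by fixing a type assignment that merely earmarks the ``sensible'' start terms, and then, for each such term, translating an innermost relative rewrite sequence into a chain tree of the same $\SSS$-weight; (b) will then follow because the remaining basic terms cannot raise the derivation height. For the type assignment I take a single ``data'' sort $\iota$ for all symbols of $\RR$'s original signature (so every $\RR$-rule is trivially well-typed with both sides of sort $\iota$), add one fresh ``cost'' sort $\gamma$, declare every sharp symbol with all-$\iota$ argument sorts and result sort $\gamma$, and set $\FCom_n \colon \gamma^n \to \gamma$ for all $n$ (a finer data typing, useful in practice, behaves the same way). With these declarations every DT $\tup{\ell} \to \FCom_k(\tup{r|}_{\pi_1},\dots,\tup{r|}_{\pi_k})$ is well-typed with both sides of sort $\gamma$ (each $\tup{r|}_{\pi_i}$ has sort $\gamma$, and its arguments, being subterms of $r$, have sort $\iota$), and every $\RR$-rule stays well-typed, so all rules of $\DD\cup\RR$ are well-typed with equal types on the two sides. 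Since $\gamma \neq \iota$, the arguments of a well-typed basic term with a sharp root contain no $\FCom_n$ and no sharp symbol; hence such a term is exactly $\tup{s}$ for a basic term $s$ of $\RR$, with $\tup{s} \in \SharpTerms$ and $|\tup{s}| = |s|$.

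For (a), fix a well-typed $t \in \BasicTerms^{\DD\cup\RR}$. If $\rt(t)$ is not sharp, no $\DD\cup\RR$-derivation from $t$ ever creates a sharp symbol, so no $\SSS$-step occurs and $\Dh(t,\itodetup) = 0 \le \Cplx_{\langle\DD,\SSS,\RR\rangle}(\tup{t})$. Otherwise $t = \tup{s}$ with $s = f(t_1,\dots,t_k) \in \BasicTerms^{\RR}$. Unfolding the relative rewrite relation, any computation of $\SSS$-length $m$ from $\tup{s}$ is an innermost $\DD\cup\RR$-sequence $\tup{s} = u_0 \to u_1 \to \cdots$ with exactly $m$ $\SSS$-steps. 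The claim is that its $\DD$-steps form a $(\DD,\RR)$-chain tree $T$ for $\tup{s}$: since the $t_i$ are normal forms, $\tup{s}$ is its own only innermost redex, so the first step is a $\DD$-step at the root with some matcher $\sigma$ (this becomes the root of $T$); inductively, every sharp subterm introduced by a $\DD$-step has arguments that are pure $\RR$-terms (instances of subterms of rule right-hand sides, which contain no sharp symbols and no $\FCom_n$), so strictly below a sharp symbol only $\RR$-steps can occur, and, the innermost strategy being taken over all of $\DD\cup\RR$, a sharp subterm is contracted only once its arguments are normal forms. Thus each sharp subterm $\tup{v}_i\mu$ created at a node is reduced by innermost $\RR$-steps to the redex $\tup{p}\delta$ of the $\DD$-step that contracts it, so the chain-tree conditions are met ($\tup{v}_i\mu \itos \tup{p}\delta$, and $\tup{p}\delta$ is in $\RR$-normal form since its arguments are and its root is sharp), and distinct argument positions of an $\FCom_n$ give distinct immediate children. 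As the nodes of $T$ are exactly the $\DD$-steps of the sequence, $|T|_\SSS = m$; since this holds for every $m \le \Dh(\tup{s},\itodetup)$ (and an infinite such tree arises for arbitrarily long or infinite computations), we obtain $\Cplx_{\langle\DD,\SSS,\RR\rangle}(\tup{s}) \ge \Dh(\tup{s},\itodetup)$, which is (a).

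For (b), let $t \in \BasicTerms^{\DD\cup\RR}$ with $|t| \le n$. If $t$ is well-typed, (a) together with $|\tup{s}| = |s|$ gives $\Dh(t,\itodetup) \le \Cplx_{\langle\DD,\SSS,\RR\rangle}(\tup{s}) \le \irc{\langle\DD,\SSS,\RR\rangle}(n)$. If $t$ is not well-typed, its ill-typing sits in subterms headed by $\FCom_n$ (or by a sharp symbol heading no DT); these are built only from constructors of $\DD\cup\RR$, hence inert under $\someito{\DD\cup\RR}$, and they occupy positions that no left-hand side inspects other than through a variable, so replacing each of them by a fresh variable yields a well-typed $t'$ with $|t'| \le |t| \le n$ and $\Dh(t,\itodetup) \le \Dh(t',\itodetup)$, and we conclude as above. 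Hence $\irc{\detupTRS}(n) \le \irc{\langle\DD,\SSS,\RR\rangle}(n)$. I expect the main obstacle to be the faithful chain-tree extraction in (a): showing that the innermost strategy forces the $\DD$-steps of a relative $\DD\cup\RR$-sequence to nest exactly as in a chain tree — sharp symbols contracted only above normal forms, only pure $\RR$-terms occurring below sharp symbols so that the intermediate steps are genuine $\itos$-reductions on matched redexes already in $\RR$-normal form — and that the $\SSS$-count is preserved along the way; the remaining ingredients (the inertness of the $\FCom_n$-junk in (b) and the sort bookkeeping) are routine by comparison.
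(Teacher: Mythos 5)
Your proof of part (a) follows essentially the same route as the paper: the same two-sorted type assignment (data sort for $\Signature$, a second sort for sharp symbols and $\FCom_n$), the same case split on whether the root of the well-typed basic term is sharp, and the same extraction of a $(\DD,\RR)$-chain tree from an innermost $\DD\cup\RR$-sequence with the $\SSS$-count preserved. The paper is more explicit about the one point you flag as the main obstacle — it reorders each segment $s \someitos{\RR} u \someito{\DD} v$ by advancing the $\RR$-steps below the position of the next $\DD$-redex, so that the witnessing reductions $\tup{v}_i\mu \itos \tup{p}\delta$ appear contiguously — but your observation that innermostness is a local property of the redex, so that projecting the scattered $\RR$-steps onto the subterm below a given sharp occurrence yields such a reduction anyway, closes that gap. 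Where you genuinely diverge is part (b): the paper simply invokes the persistence of innermost runtime complexity under type introduction (citing an external result) to restrict attention to well-typed start terms, whereas you give a self-contained argument that ill-typed basic terms can be replaced by well-typed ones of no larger size and no smaller derivation height, because the offending subterms are constructor junk that no left-hand side of $\DD\cup\RR$ inspects except through a variable. This buys you independence from the persistence theorem at the cost of one extra lemma; the only point to tighten is that ``replacing each of them by a fresh variable'' must be done consistently (syntactically identical ill-typed subterms mapped to the \emph{same} fresh variable), since otherwise a non-left-linear left-hand side such as $g(x,x)$ could match the original term but not its abstraction, and the inequality $\Dh(t,\itodetup) \leq \Dh(t',\itodetup)$ would fail.
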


\autoref{thm:upper_rel}
and \autoref{thm:lower_rel}
hold
regardless of
whether the original
DT problem was obtained from a TRS with sequential or with parallel
evaluation.
So while this
kind of connection between DT (or DP) problems and
relative rewriting may be
folklore in the community, its application
to convert a TRS whose \emph{parallel} complexity is sought
to a TRS with the same \emph{sequential} complexity is new.
\smallskip

Note that \autoref{thm:canonical_pdt_problem} requires confluence of
$\pito$ to derive lower bounds for $\pirc{\RR}$
from lower complexity bounds of the canonical parallel DT problem.
So
to use \autoref{thm:lower_rel} to search for \emph{lower}
complexity bounds with existing techniques~\cite{LowerBounds},
we need a criterion for confluence of parallel-innermost rewriting.

\begin{example}[Confluence of $\ito$ does not Imply Confluence of
$\pito$]
To see that we cannot prove confluence of $\pito$ just by
using a standard off-the-shelf tool for
confluence analysis of innermost or full rewriting \cite{coco}, consider the TRS
$\RR = \{\Fa \to \Ff(\Fb,\Fb), \Fa \to \Ff(\Fb,\Fc), \Fb \to \Fc,
\Fc \to \Fb\}$. For this TRS, both $\ito$ and $\someto{\RR}$ are confluent.
However, $\pito$ is not confluent: we can rewrite both
$\Fa \pito \Ff(\Fb,\Fb)$ and $\Fa \pito \Ff(\Fb,\Fc)$,
yet there
is no term $v$ such that $\Ff(\Fb,\Fb) \pitos v$ and
$\Ff(\Fb,\Fc) \pitos v$. The reason is that the only possible rewrite
sequences with $\pito$ from these terms are
$\Ff(\Fb,\Fb) \pito \Ff(\Fc,\Fc) \pito \Ff(\Fb,\Fb) \pito \dots$
and
$\Ff(\Fb,\Fc) \pito \Ff(\Fc,\Fb) \pito \Ff(\Fb,\Fc) \pito \dots$,
with no terms in common.
\end{example}

\begin{conjecture}
If $\pito$ is confluent, then $\ito$ is confluent.
\end{conjecture}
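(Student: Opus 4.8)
A possible route to a proof is to establish the contrapositive through a single \emph{cofinality} lemma. First I would reduce the conjecture to the statement $(\star)$: \emph{for all terms $s, t$ with $s \itos t$ there is a term $t'$ with $s \pitos t'$ and $t \itos t'$} --- i.e.\ along $\itos$ the $\pito$-reducts of $s$ are cofinal among the $\ito$-reducts of $s$. Given $(\star)$, the conjecture follows by a short diagram chase: assume $\pito$ is confluent and $s \itos t_1$, $s \itos t_2$; by $(\star)$ there are $t_1', t_2'$ with $s \pitos t_i'$ and $t_i \itos t_i'$ for $i \in \{1,2\}$; confluence of $\pito$ gives $v$ with $t_1' \pitos v$ and $t_2' \pitos v$; and since $\pito \subseteq \itos$ by \autoref{def:pito}, we get $t_i \itos t_i' \itos v$, so $t_1$ and $t_2$ are joinable by $\ito$, as desired. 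Note that the usual shortcut ``a relation and its parallel version have the same transitive closure, hence the same confluence'' is \emph{not} available here: since $\pito$ fires \emph{all} innermost redexes simultaneously, we only have $\pitos \subsetneq \itos$ in general (e.g.\ with the single rule $a \to b$ we have $f(a,a) \ito f(b,a)$, but $f(a,a)$ cannot reach $f(b,a)$ via $\pitos$), so this direction of the equivalence genuinely fails and $(\star)$ is exactly the missing ingredient.

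Towards $(\star)$, I would first record the easy one-step completion lemma: if $s \ito s_1$ by contracting an innermost redex at a position $\pi$ with a rule $\rho$, then there is $s_1'$ with $s \pito s_1'$ and $s_1 \itos s_1'$. Indeed, the innermost redex positions of $s$ are pairwise parallel (if one were a proper prefix of another, the deeper one would make the shallower one non-innermost); writing them as $\pi, \tau_1, \dots, \tau_k$, perform the $\pito$-step that uses $\rho$ at $\pi$ and arbitrary applicable rules at $\tau_1, \dots, \tau_k$, obtaining $s_1'$; as the $\tau_j$ are parallel to $\pi$, their subterms are unchanged in $s_1$ and are still innermost redexes there, so $s_1$ reaches $s_1'$ by contracting them one after another with $\ito$.

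The plan for $(\star)$ itself is an induction on the length of the derivation $s \itos t$, maintaining a \emph{residual correspondence} between the single-redex contractions of the $\ito$-derivation and a maximal-parallel derivation starting from $s$. At each $\pito$-round one fires the entire current antichain of innermost redexes, choosing at each position the rule that the $\ito$-derivation uses for (a residual of) that redex, and an arbitrary applicable rule for redexes the $\ito$-derivation never touches; redexes created by a contraction --- in particular those appearing at a parent position once a child has been normalised --- are deferred to the next round, exactly as $\pito$ demands. The local fact that makes the bookkeeping work is that contractions at distinct (hence parallel) residual positions commute, so the interleaving of such contractions chosen by the $\ito$-derivation is irrelevant and can be rearranged to line up with the $\pito$-rounds; finitely many ``catch-up'' $\ito$-steps at the end then reach a genuine $\pito$-reduct $t'$ of $s$ with $t \itos t'$.

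I expect the main obstacle to be precisely this residual/standardisation theory for \emph{parallel-innermost} rewriting of \emph{arbitrary} TRSs: the systems considered need be neither left-linear nor non-overlapping, so the classical residual calculus does not apply directly, and the delicate point is controlling how redexes are created and destroyed at and above the contracted position while $\pito$ is forced to process whole antichains at a time. Non-termination makes this worse --- one cannot invoke Newman's lemma, as confluence of $\pito$ does not entail termination --- so $(\star)$ must be argued directly rather than reduced to local confluence. An alternative would be to look for a decreasing-diagrams labelling of $\itos$ compatible with $\pito$, or to adapt known criteria for innermost confluence; but in each approach the combination of a non-orthogonal TRS with the ``fire every redex now'' rigidity of $\pito$ is where the real difficulty sits.
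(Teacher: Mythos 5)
You should first be aware that the paper offers no proof of this statement: it is stated explicitly as a conjecture and left open, so there is no argument of the authors' to compare yours against. Judged on its own terms, your proposal contains correct and useful pieces --- the diagram chase deriving the conjecture from the cofinality property $(\star)$ is sound (using ${\pito} \subseteq {\itop} \subseteq {\itos}$ from \autoref{def:pito}), your observation that the usual ``same transitive closure'' shortcut fails because $\pitos$ is strictly contained in $\itos$ is exactly right, and the one-step completion lemma is correct as stated, resting on the fact that the innermost redexes of a term occupy pairwise parallel positions.

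The genuine gap is that the heart of the matter, the proof of $(\star)$ itself, is only a plan. The one-step lemma does not compose into $(\star)$ by a naive induction on derivation length: peeling off the first step $s \ito s_1$ gives you $s \pito s_1'$ with $s_1 \itos s_1'$ from the lemma, and $s_1 \pitos t_1$ with $t \itos t_1$ from the induction hypothesis, and joining $s_1'$ with $t_1$ is precisely a confluence-type problem of the kind you are trying to establish --- so the argument is circular unless the ``residual correspondence'' you describe is actually constructed and its invariants proved. That construction is where all the difficulty lives: one must show that the extra contractions $\pito$ is forced to perform (at positions the sequential derivation never touches) remain replayable by $\ito$ from $t$, that rule choices at root-overlapping redexes can be mimicked round by round, and that a redex sitting above several parallel subterms fires identically in both worlds once those subterms are normalised --- none of which is routine for TRSs that are neither left-linear nor non-overlapping, as you yourself note. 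I believe $(\star)$ is true, for the structural reasons you identify (innermost redexes are pairwise parallel, contracting one creates new redexes only at or above its own position, and an outer redex becomes innermost only after all its arguments are normal forms), but until the simulation invariant is stated and verified, what you have is a credible reduction of one open problem to another, not a proof.
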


Confluence means:
if a term $s$ can be rewritten to two
different terms $t_1$ and $t_2$ in 0 or more steps,
it is
always possible to rewrite $t_1$ and $t_2$ in 0 or more steps
to
a
term $u$.
For
$\pito$, the redexes that get rewritten are
fixed: all
innermost redexes simultaneously.
Thus,
$s$ can
rewrite to two \emph{different} terms $t_1$ and $t_2$
only if at least one of these redexes can be rewritten in two
different ways using
$\ito$.

Towards a sufficient criterion for confluence of parallel-innermost
rewriting, we introduce the following standard definition:

\begin{definition}[Non-Overlapping]
A TRS $\RR$ is \emph{non-overlapping} iff for any two rules
$\ell \to r, u \to v \in \RR$ where variables have been renamed apart
between the rules, there is no position $\pi$ in $\ell$ such that
$\ell|_\pi \notin \VV$
and the terms $\ell|_\pi$ and $u$ unify.
\end{definition}

A sufficient criterion that a given redex has a unique result from a
rewrite step is given in the following.

\begin{lemma}[\cite{BaaderNipkow}, Lemma 6.3.9]
If a TRS $\RR$ is non-overlapping,
$s \someto{\RR} t_1$ and
$s \someto{\RR} t_2$ with the redex of both rewrite steps at the same
position, then $t_1 = t_2$.
\end{lemma}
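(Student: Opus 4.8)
The plan is to proceed by structural induction on the position $\pi$ at which both redexes sit, mirroring the standard argument from \cite{BaaderNipkow}. First I would treat the base case $\pi = \varepsilon$: here $s = \ell_1\sigma_1 = \ell_2\sigma_2$ for rules $\ell_1 \to r_1$ and $\ell_2 \to r_2$ in $\RR$ (with variables renamed apart), and $t_1 = r_1\sigma_1$, $t_2 = r_2\sigma_2$. I want to show the two rules must in fact be the ``same'' in the relevant sense. Since $\ell_1\sigma_1 = \ell_2\sigma_2$, the terms $\ell_1$ and $\ell_2$ unify; because $\RR$ is non-overlapping, the only unifications allowed between (renamed-apart copies of) rules occur at variable positions — but $\ell_2 \notin \VV$ since it is a left-hand side, so the only way $\ell_1$ and $\ell_2$ can unify is if one is a renaming of a subterm of the other at the root, which forces $\ell_1$ and $\ell_2$ to be literally the same rule up to variable renaming (the non-overlapping condition explicitly forbids overlap at non-variable positions, and self-overlap at the root of a single rule is the degenerate unifying case). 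Identifying the two rules, we get $\ell_1 = \ell_2 =: \ell$ and $r_1 = r_2 =: r$ after renaming, and then $\ell\sigma_1 = \ell\sigma_2 = s$ forces $\sigma_1$ and $\sigma_2$ to agree on $\VV(\ell) \supseteq \VV(r)$, hence $t_1 = r\sigma_1 = r\sigma_2 = t_2$.

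For the inductive step, suppose $\pi = i\,\pi'$ with $\pi' \neq \varepsilon$. Then $s = g(s_1,\ldots,s_m)$ for some symbol $g$, both redexes lie strictly inside $s_i$, and $t_j = g(s_1,\ldots,s_{i-1}, s_i', s_{i+1},\ldots,s_m)$ where $s_i \someto{\RR} s_i'$ via a step at position $\pi'$ in $s_i$, for $j = 1, 2$. By the induction hypothesis applied to $s_i$ and the position $\pi'$, the two results of rewriting $s_i$ at $\pi'$ coincide, so the two terms $s_i'$ (one for each original step) are equal, and therefore $t_1 = t_2$.

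I do not expect a serious obstacle here — the lemma is quoted verbatim from \cite{BaaderNipkow} and the argument is entirely routine. The one point requiring a little care is the base case: spelling out precisely why non-overlapping (as defined just above, quantifying over pairs of rules with variables renamed apart and over non-variable positions in one left-hand side) rules out every source of divergence at the root except ``the same rule with the same matcher.'' The definition as stated does not, on its face, mention self-overlap of a single rule at its own root; but a root–root overlap of a rule with (a renamed copy of) itself is exactly the statement that its left-hand side unifies with itself, which is vacuously true and does not produce two distinct contracta, since any unifier makes the renamed copies identical. So the only genuinely informative consequence is the cross-rule case, and that is precisely what the non-overlapping hypothesis kills. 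Everything else is bookkeeping with substitutions.
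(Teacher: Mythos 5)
The paper does not actually prove this lemma: it is imported verbatim from Baader and Nipkow (Lemma 6.3.9) and used as a black box to derive \autoref{cor:confluence}, so there is no in-paper argument to compare yours against. Your proof is the standard textbook one and is essentially sound: induction on the position, with the root case resolved by observing that $\ell_1\sigma_1 = \ell_2\sigma_2$ with variables renamed apart forces $\ell_1$ and $\ell_2$ to unify, which non-overlapping forbids unless the two rules are variants of the same rule, after which $\VV(r) \subseteq \VV(\ell)$ and the uniqueness of the matcher give $t_1 = t_2$. You are also right to flag that the paper's stated definition of non-overlapping, read literally, forbids the trivial root unification of a rule with its own renamed copy and would thus be satisfied by no TRS; the intended (standard) definition excludes exactly that case, and your observation that such a self-overlap cannot produce two distinct contracta is the correct repair.

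Two small blemishes, neither fatal. First, in the inductive step you write $\pi = i\,\pi'$ \emph{with} $\pi' \neq \varepsilon$, which omits depth-one positions $\pi = i$; the case split should simply be $\pi = \varepsilon$ versus $\pi = i\,\pi'$ with $\pi'$ possibly empty, the induction being on the length of $\pi$. Second, in the base case the phrase ``one is a renaming of a subterm of the other at the root'' is not what unifiability gives you (e.g.\ $f(x,\Fa)$ and $f(\Fb,y)$ unify without either being a renaming of the other); the clean statement is that a root unification of left-hand sides of two rules that are not variants of one another is precisely a forbidden overlap, so non-overlapping leaves only the same-rule case. Both are bookkeeping; the argument as a whole is correct.
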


With the above reasoning, this lemma directly gives us a sufficient criterion for confluence of
\emph{parallel-innermost} rewriting.

\begin{corollary}[Confluence of Parallel-Innermost Rewriting]
\label{cor:confluence}
If a TRS $\RR$ is non-overlapping, then $\pito$ is confluent.
\end{corollary}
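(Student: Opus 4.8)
The plan is to derive \autoref{cor:confluence} directly from the preceding \lemref{...} (Baader--Nipkow Lemma 6.3.9) together with the structural observation, already spelled out in the paragraph before \autoref{def:rel} and reiterated after the non-confluence example, that a parallel-innermost step rewrites a \emph{fixed} set of redexes (namely all innermost redexes), each at a fixed position. Concretely, I would first show the key claim that $\pito$ is \emph{deterministic} for non-overlapping $\RR$: if $s \pito t_1$ and $s \pito t_2$, then $t_1 = t_2$.

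To prove this claim, suppose $s \pito t_1$ and $s \pito t_2$. By \autoref{def:pito}, each of $t_1$ and $t_2$ is obtained from $s$ by simultaneously contracting, at each innermost-redex position $\pi$ of $s$, the redex $s|_\pi$ using some applicable rule; positions that are normal forms are left untouched. The set of innermost-redex positions of $s$ is determined by $s$ alone (it does not depend on which rule is chosen), so $t_1$ and $t_2$ differ from $s$ only at this same fixed set $\{\pi_1,\dots,\pi_m\}$ of pairwise parallel positions, and $s = s[\ell_j\sigma_j]_{\pi_j}$ for suitable rules and matchers. At each such position, $s|_{\pi_j}$ is itself a (one-step) redex whose proper subterms are normal forms, so a single $\someto{\RR}$ step at the root of $s|_{\pi_j}$ gives both $(t_1)|_{\pi_j}$ and $(t_2)|_{\pi_j}$; since $\RR$ is non-overlapping, \lemref{...} (Lemma 6.3.9) applies and yields $(t_1)|_{\pi_j} = (t_2)|_{\pi_j}$. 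As the $\pi_j$ are parallel positions and $s$, $t_1$, $t_2$ agree everywhere outside them, we conclude $t_1 = t_2$. Hence $\pito$ is deterministic, i.e.\ a partial function on terms.

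A deterministic relation is trivially confluent: if $s \pitos t_1$ and $s \pitos t_2$, then one of the two derivation sequences is a prefix of the other (at each step the next term is uniquely determined), so one of $t_1, t_2$ rewrites to the other with $\pitos$, and we may take $u$ to be whichever of $t_1, t_2$ is reached later. This gives the required common reduct and establishes confluence of $\pito$.

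The only real subtlety — and the step I would state carefully rather than wave at — is the passage from ``$s|_{\pi_j}$ can be rewritten at position $\pi_j$ in two ways'' to an application of Lemma 6.3.9: one must check that what happens at each innermost-redex position is genuinely a root rewrite step of the subterm $s|_{\pi_j}$ with proper subterms in normal form, and that the outer context and the sibling redexes are untouched, so that equality of the local results lifts to equality of $t_1$ and $t_2$. This is where \autoref{def:pito}'s recursive structure (cases (a) and (b), with parallel positions of the contracted redexes) does the work. Everything else is routine: determinism of a relation plus the prefix argument immediately yields confluence, and no properties of $\RR$ beyond non-overlappingness are needed.
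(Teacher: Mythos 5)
Your proposal is correct and takes essentially the same route as the paper, which justifies the corollary exactly by observing that a $\pito$-step contracts a fixed set of redexes (all innermost ones), so the only possible non-determinism is rewriting a single redex in two ways, which Lemma 6.3.9 rules out for non-overlapping systems. Your write-up merely makes explicit the determinism claim and the routine step from determinism to confluence that the paper leaves informal.
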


So, in those cases we can actually use this sequence of
transformations from a parallel-innermost TRS via a DT problem to an
innermost (relative) TRS to analyse both upper and lower bounds for
the original. Conveniently, these cases correspond to
deterministic
programs, our motivation for this work!

\begin{example}[\autoref{ex:to_relative_upper} continued]
\autoref{cor:confluence} and
\autoref{thm:lower_rel} imply that a lower bound for
$\irc{\DTpar(\RR)/\RR}(n)$ of the relative TRS $\DTpar(\RR)/\RR$
from \autoref{ex:to_relative} carries over to
$\pirc{\RR}(n)$ of the original TRS $\RR$ from the
TPDB.
\aprove uses rewrite lemmas \cite{LowerBounds} to find
the lower bound $\irc{\DTpar(\RR)/\RR}(n) \in \Omega(n)$.
Together with \autoref{ex:to_relative_upper}, we have
automatically inferred
that this complexity bound is \emph{tight}:
$\pirc{\RR}(n) \in \Theta(n)$.
\end{example}

\section{Implementation and Experiments}
\label{sec:expe}

We have implemented the contributions of this paper in the
automated termination and complexity analysis tool
\aprove~\cite{aprove-tool}. We added or modified 620 lines of Java code,
including
\begin{inparaenum}
\item the framework of parallel-innermost rewriting;
\item the generation of parallel DTs (\autoref{thm:canonical_pdt_problem});
\item a processor to convert them to TRSs with the same complexity
(\autoref{thm:upper_rel}, \autoref{thm:lower_rel});
\item the confluence test of~\autoref{cor:confluence}.
\end{inparaenum}
As far as we are aware, this is the first implementation of a fully
automated inference of complexity bounds for parallel-innermost
rewriting.
To demonstrate the effectiveness of our implementation, we have 
considered the 663 TRSs from category
\texttt{Runtime\_Complexity\_Innermost\_Rewriting} of the
TPDB, version 11.2~\cite{tpdb}.
This category
of the TPDB
is the benchmark collection used at
termCOMP to compare tools that
infer complexity bounds for runtime complexity of innermost
rewriting, $\ircR$.
To get meaningful results, we first applied \autoref{thm:nopar} to
exclude TRSs $\RR$ where $\pircR(n) = \ircR(n)$ trivially holds.
We obtained
294 TRSs with potential for parallelism as our benchmark set.
We conducted our experiments
on the \starexec\ compute cluster \cite{starexec} in
the \texttt{all.q} queue. The timeout per example and tool
configuration was set to 300 seconds.
Our experimental data with analysis times
and all examples are available online~\cite{evalPage}.

As remarked earlier, we always have $\pircR(n) \leq \ircR(n)$,
so an upper bound for $\ircR(n)$ is always a legitimate
upper bound for $\pircR(n)$.
Thus,
we include
upper bounds for $\ircR$
found by the state-of-the-art tools \aprove\ and \tct~\cite{versionPage,tct}.
from
termCOMP 2021
as a ``baseline'' in our evaluation.
We compare with several configurations of \aprove\ and \tct\ that use
the
techniques of this paper for $\pircR$:
``\aprove\ $\pircR$ Section~3'' also uses
\autoref{thm:canonical_pdt_problem} to produce canonical
parallel DT problems as input for the DT framework.
``\aprove\ $\pircR$ Sections~3~\&~4'' additionally uses
the transformation from \autoref{def:to_relative}
to convert a TRS $\RR$ to a relative TRS
$\DTpar(\RR)/\RR$
and then to analyse $\irc{\DTpar(\RR)/\RR}(n)$ (for lower bounds
only together with a confluence proof
via \autoref{cor:confluence}).
We also extracted each of the TRSs $\DTpar(\RR)/\RR$
and used the files as inputs for
\aprove\ and \tct\ from termCOMP 2021.
``\aprove\ $\pircR$ Section~4''
and ``\tct\ $\pircR$ Section~4'' provide the results for
$\irc{\DTpar(\RR)/\RR}$
(for lower bounds, only where $\pito$ had been proved confluent).

\begin{table}[t]
\begin{center}
\begin{tabular}{|l||c|c|c|c|c|}
\hline
Tool & $\Oh(1)$ & $\leq\Oh(n)$ & $\leq\Oh(n^2)$ & $\leq\Oh(n^3)$
     & $\leq\Oh(n^{\geq 4})$  \\\hline\hline
\tct\ $\ircR$
& 4 & 28 & 39 & 44 & 44 \\ %
\aprove\ $\ircR$
& \textbf{5} & 50 & 110 & 123 & 127 \\ %
\aprove\ $\pircR$ Section \ref{sec:para_complex}
& \textbf{5} & 65 & \textbf{125} & \textbf{140} & \textbf{142} \\ %
\aprove\ $\pircR$ Sections \ref{sec:para_complex} \&
  \ref{sec:dt_to_irc}
& \textbf{5} & \textbf{69} & \textbf{125} & 139 & 141 \\ %
\hline
\tct\ $\pircR$ Section \ref{sec:dt_to_irc}
& 3 & 39 & 52 & 56 & 57 \\
\aprove\ $\pircR$ Section \ref{sec:dt_to_irc}
& \textbf{5} & 62 & 96 & 105 & 105 \\
\hline
\end{tabular}
\vspace{3mm}
\caption{Upper bounds for runtime complexity of (parallel-)innermost
rewriting}
\label{table:ui}
\begin{tabular}{|l||c||c|c|c|c|}
\hline
Tool
& confluent
& $\geq\Omega(n)$ & $\geq\Omega(n^2)$ & $\geq\Omega(n^3)$
  & $\geq\Omega(n^{\geq 4})$ \\\hline\hline
\aprove\ $\pircR$ Sections \ref{sec:para_complex} \&
  \ref{sec:dt_to_irc}
& \textbf{186}
& 133 & \textbf{23} & \textbf{5} & \textbf{1} \\ %
\hline
\tct\ $\pircR$ Section \ref{sec:dt_to_irc}
& \textbf{186}
& 59 & 0 & 0 & 0 \\
\aprove\ $\pircR$ Section \ref{sec:dt_to_irc}
& \textbf{186}
& \textbf{155} & 22 & \textbf{5} & \textbf{1} \\
\hline
\end{tabular}
\vspace{3mm}
\caption{Lower bounds for runtime complexity of parallel-innermost
rewriting}
\label{table:li}
\begin{tabular}{|l||c|c|c|c||c|}
\hline
Tool & $\Theta(1)$ & $\Theta(n)$ & $\Theta(n^2)$ & $\Theta(n^3)$ & Total \\\hline\hline
\aprove\ $\pircR$ Sections \ref{sec:para_complex} \&
  \ref{sec:dt_to_irc}
& \textbf{5} & 32 & \textbf{1} & \textbf{3} & 41 \\ %
\hline
\tct\ $\pircR$ Section \ref{sec:dt_to_irc}
& 3 & 21 & 0 & 0 & 24 \\
\aprove\ $\pircR$ Section \ref{sec:dt_to_irc}
& \textbf{5} & \textbf{37} & \textbf{1} & \textbf{3} & \textbf{46} \\
\hline
\end{tabular}
\vspace{3mm}
\caption{Tight bounds for runtime complexity of parallel-innermost
rewriting}
\label{table:tight}
\end{center}
\vspace*{-7mm}
\end{table}

\autoref{table:ui}
gives an overview over our experimental results for upper bounds.
For each configuration, we state the number of examples for
which the corresponding asymptotic complexity bound
was
inferred.
A
column ``$\leq \Oh(n^k)$'' means that the corresponding tools proved a bound
$\leq \Oh(n^k)$ (e.g., the configuration
``\aprove\ $\ircR$'' proved constant or linear upper bounds in
50
cases). Maximum values in a column are highlighted in bold.
We observe that upper complexity bounds improve in a
noticeable number of cases, e.g., linear bounds on $\pircR$ can now be inferred
for
69 TRSs rather than for
50 TRSs (using upper bounds on $\ircR$
as an over-approximation), an improvement by 38\%.
Note that this does \emph{not} indicate deficiencies in the existing tools for $\ircR$,
which had not been designed with analysis of $\pircR$ in mind -- rather,
it shows that specialised techniques for analysing $\pircR$ are a worthwhile
subject of investigation.
Note also that \autoref{ex:size:irc} and
\autoref{ex:size_tight_bound} show that even for TRSs
with potential for parallelism, the actual parallel and sequential
complexity may still be asymptotically identical,
which further highlights the need for dedicated analysis techniques for $\pircR$.

The improvement from $\ircR$ to $\pircR$ can be drastic:
for example, for the TRS \texttt{TCT\_12/recursion\_10}, the
bounds found by \aprove\ change from an upper bound of sequential
complexity of $\OO(n^{10})$ to a (tight) upper bound for parallel
complexity of $\OO(n)$. (This TRS models a specific recursion
structure, with rules
$\{ \Ff_0(x) \to \Fa \} \cup
 \{ \Ff_i(x) \to \Fg_i(x,x), \; \Fg_i(\Fs(x), y) \to \Fb(\Ff_{i-1}(y),
 \Fg_i(x,y)) \mid 1 \leq i \leq 10 \}$, and is highly amenable to parallelisation.)
We observe that adding the
techniques from \autoref{sec:dt_to_irc} to the techniques from
\autoref{sec:para_complex} leads to only few examples for which better
upper bounds can be found (one of them is \autoref{ex:to_relative_upper}).

\autoref{table:li} shows our results for lower bounds on $\pircR$.
Here we evaluated only
configurations including
\autoref{def:to_relative} to make inference techniques
for lower bounds of $\ircR$ applicable to $\pircR$.
The reason is that a lower
bound on $\ircR$ is not necessarily also a lower bound for $\pircR$
(the whole \emph{point} of performing innermost rewriting in parallel is to
reduce the asymptotic complexity!), so
using results by tools that compute lower bounds on $\ircR$ for
comparison would not make sense. We observe that non-trivial lower
bounds can be inferred for
155
out of the
186 examples proved confluent via \autoref{cor:confluence}.
This shows that
our transformation from \autoref{sec:dt_to_irc}
has practical value since it
produces relative TRSs
that are generally amenable to analysis by existing program analysis
tools.
Finally, \autoref{table:tight} shows that for overall
46 TRSs, the bounds that were found are asymptotically \emph{precise}.

\section{Related Work, Conclusion, and Future Work}
\label{sec:related}
\label{sec:conclusion}

\emph{Related work.}
We provide pointers to work on automated analysis of (sequential) innermost runtime complexity of
TRSs at the start of \autoref{sec:dt_to_irc}.
We now focus on
automated techniques
for
complexity analysis of
parallel/concurrent computation.

Our
notion of
parallel complexity follows a large
tradition of static \emph{cost analysis}, notably
for
concurrent programming. The two notable
works~\cite{AlbertLCTES11,AlbertTOCL18} address
async/finish programs where tasks are explicitly launched.
The authors propose several metrics such as the total number of spawned
tasks (in any execution of the program) and
a notion of
parallel complexity that is roughly the same as ours. They provide
static analyses that build on
techniques for estimating costs
of imperative languages with functions calls~\cite{Albert2012}, and/or
recurrence equations.
Recent approaches for the Pi
Calculus~\cite{BaillotPiESOP21,BaillotPiCONCUR21} compute
the
\emph{span} (our parallel complexity)
through a new typing system. Another type-based calculus for the same
purpose has been proposed with session types~\cite{HoffmannICFP18}.

For logic programs, which -- like TRSs -- express an implicit parallelism,
parallel complexity can be inferred
using recurrence solving~\cite{GallagherLOPSTR19}.

The tool \raml~\cite{raml} 
derives bounds on the worst-case evaluation cost of
first-order functional programs with list and pair
constructors as well as pattern matching and both sequential and
parallel composition \cite{HoffmannESOP15}.
They use two typing derivations with specially annotated types, one for the \emph{work}
and one for the \emph{depth} (parallel complexity).
Our
setting is more flexible wrt the shape of user-defined
data structures (we allow for tree constructors of arbitrary arity),
and our analysis
deals with both data structure and control in an integrated manner.

\emph{Conclusion and future work.}
We have defined parallel-innermost
runtime complexity for TRSs and proposed an approach to its
automated analysis.
Our approach allows for finding both upper and lower bounds and builds
on existing techniques and tools. Our experiments on the TPDB indicate that our
approach is practically usable, and we are
confident that it captures the potential parallelism of programs with pattern matching.

Parallel rewriting is a topic of active research, e.g., for GPU-based
massively parallel rewrite engines \cite{gpu_trs}. Here our work could
be useful to determine which functions to evaluate on the GPU.
More generally, parallelising
compilers which need to determine which function calls should be
compiled into parallel code may benefit from an analysis of
parallel-innermost runtime complexity such as ours. 

DTs have been used \cite{lctrsComplexity} in runtime complexity analysis of
\emph{Logically Constrained TRSs (LCTRSs)}~\cite{lctrs13}, an extension of
TRSs by built-in data types from SMT theories
(integers, arrays, \ldots).
This work could be extended to parallel rewriting.
Moreover, analysis of
\emph{derivational complexity} \cite{derivational89}
of parallel-innermost term rewriting can be a promising direction. Derivational complexity
considers the length of rewrite sequences from arbitrary start terms, e.g.,\linebreak
$\Fd(\Fd(\dots(\Fd(\FS(\FZero)))\dots))$ in \autoref{ex:doubles},
which can have longer derivations than basic terms of the same size.
Finally, towards
automated parallelisation
we aim to
infer
complexity bounds wrt\ term
\emph{height}
(terms = trees!), as suggested in~\cite{wst16trs}.

\medskip

\emph{Acknowledgements.} We thank the anonymous reviewers for helpful comments.

\newpage
\bibliography{references}

\newpage
\appendix
\section{Proofs}

\subsection{Proof of~\autoref{thm:cplxPar}}

To prove \autoref{thm:cplxPar}, we need some further definitions
and lemmas.

\begin{definition}[Argument Normal Form \cite{DependencyTuple}, Maximal Parallel Argument Normal Form]
A term $t$ is an \emph{argument normal form} iff $t \in \VV$ or
$t = f(t_1,\ldots,t_n)$ and all $t_i$ are in normal form.
A term $\maxanf{t}$ is a \emph{maximal parallel argument normal form} of a
term $t$ iff $\maxanf{t}$ is an argument normal form such that
$t \pitosbelow \maxanf{t}$ and for all argument normal forms
$t'$ with $t \pitosbelow t'$, we have $\Dh(t',\pito) \leq \Dh(\maxanf{t},\pito)$. Here $u \pitosbelow v$ denotes a rewrite sequence
with $\pito$ where all steps are at positions $> \varepsilon$.
\end{definition}

The following lemma is adapted to the parallel
setting from \cite{DependencyTuple}.

\begin{lemma}[Parallel Derivation Heights of Nested Subterms]
\label{lem:nested}
Let $t$ be a term, let $\RR$ be a TRS such that all
reductions of $t$ with $\ito$ are finite. Then
\begin{align*}
\Dh(t,\pito) &\leq
\max
\{
 \sum_{1 \leq i \leq k} \Dh(\maxanf{t|_{\pi_i}},\pito)
 \mid  \msdc{t}{\pi_1, \dots, \pi_k}
\}
\end{align*}

If $\pito$ is confluent, then we additionally have:
\begin{align*}
\Dh(t,\pito) &=
\max
\{
 \sum_{1 \leq i \leq k} \Dh(\maxanf{t|_{\pi_i}},\pito)
 \mid
 \msdc{t}{\pi_1, \dots, \pi_k}
\}
\end{align*}

\end{lemma}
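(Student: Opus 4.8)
The plan is to prove both the inequality and, under confluence, the equality by induction on $\Dh(t,\pito)$ (which is finite by assumption since all $\ito$-reductions of $t$ are finite, and $\pito \subseteq \itop$ implies the same for $\pito$). The base case $\Dh(t,\pito) = 0$ means $t$ is a normal form, so $\PosDef(t) = \emptyset$, whence $\msdcSym(t) = \{\lmsdc\rmsdc\}$ and the right-hand side is the empty sum $= 0$; both sides agree. For the inductive step, suppose $t$ admits at least one $\pito$-step. First I would rewrite $t$ as far as possible \emph{strictly below the root} using $\pitosbelow$, reaching some term in which every proper subterm at a $\PosDef$-position of $t$ that is \emph{not} itself nested below another has been driven to a maximal parallel argument normal form $\maxanf{t|_{\pi}}$; this consumes, for each maximal chain, exactly $\sum_i \Dh(\maxanf{t|_{\pi_i}},\pito)$ steps along the ``slowest'' chain and $\max$ over chains overall, by unfolding \autoref{def:structdeps} and the definition of $\pitosbelow$. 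The key bookkeeping observation is that a $\pito$-step on $t$ acts simultaneously on \emph{all} innermost redexes, so steps happening at parallel positions are charged only once (the $\max$), while steps at comparable positions $\pi_i > \pi_{i+1}$ must be sequentialised (the $\sum$), because the outer redex instance at $\pi_{i+1}$ cannot fire until the inner one at $\pi_i$ is a normal form — this is precisely the ``happens-before'' intuition recorded after \autoref{def:structdeps}.

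After reaching an argument normal form $\maxanf{t}$ below the root, either $\maxanf{t}$ is a normal form (then we are done, having counted all steps) or its root is now a redex; but by the definition of maximal parallel argument normal form, continuing to rewrite $\maxanf{t}$ can only happen at the root or strictly below, and the ``below'' part is already accounted for, while a root step produces some $t'$ with $\Dh(t',\pito) < \Dh(t,\pito)$ to which the induction hypothesis applies. One must check that the maximal structural dependency chains of $t'$, together with the at-most-one-extra root step, recombine to give chains of $t$ with the right sums; since a root rewrite replaces $t|_\varepsilon$ and the newly created defined-symbol positions in the contractum are all $> \varepsilon$, every maximal chain of $t'$ extends (or is) a maximal chain of $t$ after prepending $\varepsilon$, and the ``$+1$'' for the root step is absorbed into $\Dh(\maxanf{t|_\varepsilon},\pito)$, which must be revisited carefully — but for the upper bound it suffices that $\Dh(t,\pito) \le (\text{below-root cost}) + 1 + \Dh(t',\pito)$ and that this is $\le$ the claimed maximum. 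For the equality under confluence, I would instead \emph{choose} a chain realising the maximum: confluence guarantees unique normal forms (footnote to \autoref{thm:cplxPar}), so each $\maxanf{t|_{\pi_i}}$ is well-defined up to the relevant quantities, and one can exhibit a concrete $\pito$-sequence of exactly the claimed length by always rewriting, at each moment, along the subterms of the chosen maximal chain together with everything parallel to it, showing the bound is attained.

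\textbf{Main obstacle.} I expect the delicate point to be the interleaving/charging argument that separates the $\max$ (over chains) from the $\sum$ (within a chain) while respecting that $\pito$ fires \emph{all} innermost redexes at once. One has to argue that rewriting the whole term in parallel never takes longer than the slowest chain — i.e., progress on a fast chain is never stalled waiting for a slow one, since independent redexes sit at parallel positions and are contracted in the same $\pito$-step — and, for the lower-bound direction under confluence, that one genuinely \emph{cannot} do better than the slowest chain because the outermost redex on that chain is forced to wait for all its nested defined-symbol subterms. Making the induction hypothesis strong enough to carry the ``$\maxanf{\cdot}$'' data through a root step (so that the chains of $t'$ glue onto those of $t$ with matching sums, including the $+1$) is the second source of friction; a clean way around it may be to prove the lemma simultaneously with a statement about $\Dh(\maxanf{t},\pito)$ versus the sums over chains of $t$ restricted to positions $> \varepsilon$, so that the root step is handled uniformly with the recursive subterm case.
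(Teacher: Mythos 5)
Your intuition (parallel positions contribute a $\max$, nested positions a $\sum$, and the derivation decomposes into a below-root phase followed by a root step) matches the paper's, but the induction you set up does not go through as described, for two concrete reasons. First, the measure: you induct on $\Dh(t,\pito)$, yet the crucial claim about the below-root phase --- that normalising the arguments costs exactly $\max$ over chains of $\sum_i \Dh(\maxanf{t|_{\pi_i}},\pito)$ --- is not obtained ``by unfolding \autoref{def:structdeps}''; it \emph{is} the lemma applied to the arguments $t|_j$, so it needs the induction hypothesis there. But $\Dh(t|_j,\pito)$ need not be strictly smaller than $\Dh(t,\pito)$ (e.g.\ when the root is a constructor, or a defined symbol that never becomes a redex, $\Dh(t,\pito)=\max_j\Dh(t|_j,\pito)$), so your hypothesis is not applicable to the arguments. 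The paper avoids this by inducting on $|t|$ instead, which strictly decreases for every argument; that single change, plus the one inequality $\Dh(t,\pito)\leq\Dh(\maxanf{t},\pito)+\max\{\Dh(t|_j,\pito)\mid 1\leq j\leq n\}$, is essentially the whole proof.

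Second, the step you yourself flag as delicate --- gluing the maximal chains of the contractum $t'=r\sigma$ onto those of $t$ --- is a wrong turn rather than a gap to be filled. A maximal chain of $t$ whose last entry is $\varepsilon$ carries the single summand $\Dh(\maxanf{t|_\varepsilon},\pito)=\Dh(\maxanf{t},\pito)$, and this number already accounts for the root step \emph{and everything that happens afterwards inside $t'$}; it is never unfolded into chains of $t'$. Indeed it cannot be: the defined positions of $t'$ come from the rule's right-hand side and the matcher, and bear no relation to $\PosDef(t)$. (That unfolding is done one level up, in the proof of \autoref{thm:cplxPar}, not inside this lemma.) With the paper's decomposition the root step never has to be ``handled uniformly with the recursive subterm case''; one only observes that every maximal chain of $t$ is a maximal chain of some $t|_j$ shifted by $j$, extended by $\varepsilon$ exactly when $\rt(t)$ is defined (and when it is not, $\Dh(\maxanf{t},\pito)=0$). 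A small further slip: in your base case, $\Dh(t,\pito)=0$ does not imply $\PosDef(t)=\emptyset$ --- a normal form such as $\Fplus(x,y)$ for variables $x,y$ has a defined root --- although the conclusion still holds there since every $\maxanf{t|_\pi}$ is then a normal form with derivation height $0$. The confluence direction is handled in the paper simply by noting that $\maxanf{t}$ is then uniquely determined, turning the displayed inequality into an equality; no explicit witnessing reduction needs to be constructed (and note that $\pito$ leaves no choice of which redexes to contract, only of which rule to apply).
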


\begin{proof}[of \autoref{lem:nested}]
By induction on the term size $|t|$.
If $|t| = 1$, the statement follows immediately
since $\maxanf{t}\ = t$.
Now consider the case $|t| > 1$. Let $n$ be the arity
of the root symbol of $t$. In (parallel-)innermost
rewriting, a rewrite step at the root of $t$ requires
that the arguments of $t$ have been rewritten to normal
forms. Since rewriting of arguments takes place in parallel
(case (b) of \defref{def:pito} applies), we have
\[
\Dh(t, \pito) \leq \Dh(\maxanf{t}, \pito) +
   \max \{ \quad \Dh(t|_j, \pito) \quad \mid 1 \leq j \leq n \}
\]
If $\pito$ is confluent, we additionally have equality
in the previous as well as in the next (in)equalities
since $\maxanf{t}$ is uniquely determined.

As $|t_j| < |t|$, we can apply the induction hypothesis:
\begin{align*}
\Dh(t, \pito) &\leq \Dh(\maxanf{t}, \pito) +
   \max \{\\
&    \hspace*{-4.5ex} \max \{
         \sum_{1 \leq i \leq m} \Dh(\maxanf{t|_{j.\tau_i}},\pito)
         \mid \msdc{t|_j}{\tau_1, \dots, \tau_m} \}
      \hspace*{2ex} \mid  1 \leq j \leq n \}
\end{align*}
Equivalently:
\begin{align*}
\Dh(t, \pito) &\leq
   \max \{  \Dh(\maxanf{t}, \pito) +
         \sum_{1 \leq i \leq m} \Dh(\maxanf{t|_{j.\tau_i}},\pito)
                   \mid 1 \leq j \leq n,\\[-2ex]
& \hspace*{42ex} \msdc{t|_j}{\tau_1, \dots, \tau_m}
                            \}\\
&=
\max
\{
 \sum_{1 \leq i \leq k} \Dh(\maxanf{t|_{\pi_i}},\pito)
 \mid \msdc{t}{\pi_1, \dots, \pi_k}
\}
\end{align*}

For the last equality, consider that
the maximal structural dependency chains $\pi_1, \dots, \pi_k$ of
$t$ can have two forms. If the root of $t$ is a defined symbol,
we have $\msdc{t}{j.\tau_1, \dots, j.\tau_m, \varepsilon}$.
Otherwise $\Dh(\maxanf{t}, \pito) = 0$ and thus
$\msdc{t}{j.\tau_1, \dots, j.\tau_m}$.
\qed
\end{proof}

We now can proceed with the proof of~\autoref{thm:cplxPar}.

\begin{proof}[of~\autoref{thm:cplxPar}]
As the first case, consider $\Dh(t, \pito) = \omega$.
Since $t$ is in argument normal form, the first rewrite step
from $t$ must occur at the root. Thus, there are
$\ell_1 \to r_1 \in \RR$ and a substitution $\sigma_1$
such that $t = \ell_1\sigma_1 \pito r_1\sigma_1$ and
$\Dh(r_1\sigma_1,\pito) = \omega$. Hence, there is a minimal subterm
$r_1\sigma_1|_{\pi_1}$ of $r_1\sigma_1$ such that
$\Dh(r_1\sigma_1|_{\pi_1},\pito) = \omega$ and all proper subterms of
$r_1\sigma_1|_{\pi_1}$ terminate wrt $\pito$. As $\sigma_1$
must instantiate all variables with normal forms, we have
$\pi_1 \in \PosDef(r_1)$, i.e., $r_1\sigma_1|_{\pi_1} =
r_1|_{\pi_1}\sigma_1$.
In the infinite $\pito$-reduction of $r_1|_{\pi_1}\sigma_1$,
all arguments are again reduced to normal forms first, and
we get a term $t'$ with $\Dh(t', \pito) = \omega$.
Since $t'$ is in argument normal form, the first rewrite step from
$t'$ must occur at the root. Thus, there are
$\ell_2 \to r_2 \in \RR$ and a substitution $\sigma_2$
such that $t = \ell_2\sigma_2 \pito r_2\sigma_2$ and
$\Dh(r_2\sigma_2,\pito) = \omega$.
This argument can be continued \emph{ad infinitum}, giving rise to an
infinite path in the chain tree
\[
(\tup{\ell}_1 \to \FCom_{n_1}(\ldots,\tup{r_1|}_{\pi_1},\ldots) \mid
\sigma_1), \qquad
(\tup{\ell}_2 \to \FCom_{n_2}(\ldots,\tup{r_2|}_{\pi_2},\ldots) \mid
\sigma_2), \qquad \dots
\]
Thus, $\tup{\ell}_1\sigma_1 = \tup{t}$ has an infinite chain
tree, and
$\Cplx_{\langle \DTpar(\RR), \DTpar(\RR), \RR \rangle}(\tup{t}) =
\omega$.

Now consider the case that $\Dh(t,\pito) \in \nat$.
We use induction on $\Dh(t,\pito)$. If $\Dh(t,\pito) = 0$,
the term $t$ is in normal form wrt $\RR$. Thus, $\tup{t}$ is in
normal form wrt $\DTpar(\RR) \cup \RR$, and
$\Cplx_{\langle \DTpar(\RR), \DTpar(\RR), \RR \rangle}(\tup{t}) = 0$.

If $\Dh(t,\pito) > 0$, since $t$ is in argument normal form, there are
$\ell \to r \in \RR$ and a substitution $\sigma$
such that $t = \ell\sigma \pito r\sigma = u$ and
\begin{align}
\Dh(t, \pito) &= 1 + \Dh(u, \pito) \label{eq:topstep}
\end{align}
As $\sigma$
must instantiate all variables with normal forms, we have
that $u|_\pi = r\sigma|_\pi$ is in normal form for all
$\pi \in \PosDef(u) \setminus \PosDef(r)$. For these positions $\pi$,
$\maxanf{u|_\pi} = u|_\pi$ and $\Dh(u|_\pi, \pito) = 0$.
From \lemref{lem:nested}, we get:
\begin{align}
&\quad\:\Dh(u,\pito) \notag\\
& \leq
\max
\{
 \sum_{1 \leq i \leq k} \hspace*{-1ex} \Dh(\maxanf{u|_{\pi_i}},\pito)
 \mid \msdc{u}{\pi_1, \dots, \pi_k}
\}\notag\\[1.5ex]
& =
\max
\{
 \sum_{1 \leq i \leq j} \hspace*{-1ex} \Dh(\maxanf{u|_{\pi_i}},\pito) +
 \hspace*{-2ex}
 \sum_{j+1 \leq i \leq k}
 \hspace*{-2ex}
 \Dh(\maxanf{u|_{\pi_i}},\pito)
 \mid \msdc{u}{\pi_1, \dots, \pi_k},
  \notag\\
& \hspace*{25ex}
\pi_1, \dots, \pi_j \in \PosDef(u) \setminus \PosDef(r),
\pi_{j+1}, \dots, \pi_k \in  \PosDef(r) \}\notag\\[1.5ex]
& =
\max
\{
 \sum_{1 \leq i \leq j} \hspace*{-1ex} \Dh(u|_{\pi_i},\pito) \,\:\; +
 \hspace*{-2ex}
 \sum_{j+1 \leq i \leq k}
 \hspace*{-2ex}
 \Dh(\maxanf{u|_{\pi_i}},\pito)
 \mid \msdc{u}{\pi_1, \dots, \pi_k},
 \notag\\
& \hspace*{25ex}
\pi_1, \dots, \pi_j \in \PosDef(u) \setminus \PosDef(r),
\pi_{j+1}, \dots, \pi_k \in  \PosDef(r) \}\notag\\[1.5ex]
& =
\max
\{ \hspace*{4ex}
 \sum_{j+1 \leq i \leq k} \Dh(\maxanf{u|_{\pi_i}},\pito)
 \mid \msdc{u}{\pi_1, \dots, \pi_k},
 \label{eq:maxanf}\\[-2ex]
& \hspace*{25ex}
\pi_1, \dots, \pi_j \in \PosDef(u) \setminus \PosDef(r),
\pi_{j+1}, \dots, \pi_k \in  \PosDef(r) \}\notag
\end{align}
Note that $\Dh(\maxanf{u|_{\pi}},\pito) \leq
\Dh(u|_{\pi},\pito) < \Dh(t,\pito)$ holds for all $\pi \in
\PosDef(r)$. Thus, with the induction hypothesis,
\eqref{eq:topstep} and \eqref{eq:maxanf}, we get:

\begin{align}
&\quad\:\Dh(t, \pito)\notag\\
& = 1 + \Dh(u, \pito)\notag\\
& \leq
1 +
\max
\{
 \sum_{j+1 \leq i \leq k}
  \hspace*{-1ex}
  \Cplx_{\langle \DTpar(\RR), \DTpar(\RR), \RR
                                       \rangle}(\tup{\maxanf{u|_{\pi_i}}})
\mid
 \msdc{u}{\pi_1, \dots, \pi_k},
 \label{eq:ind}\\[-2ex]
& \hspace*{18ex}
\pi_1, \dots, \pi_j \in \PosDef(u) \setminus \PosDef(r),
\pi_{j+1}, \dots, \pi_k \in  \PosDef(r) \}\notag
\end{align}
Let $\lmsdc \pi_1,\ldots,\pi_k \rmsdc$ be an arbitrary maximal structural
dependency chain for $r$. Then there exists a corresponding
chain tree for $\tup{t}$ whose root node is
$(\tup{\ell} \to \FCom_{k}(\tup{\maxanf{r_1|_{\pi_1}}},\ldots,
   \tup{\maxanf{r_k|_{\pi_k}}}) \mid \sigma)$
and where the children of the root node are maximal
chain trees for
$\tup{\maxanf{u|_{\pi_1}}},\ldots,\tup{\maxanf{u|_{\pi_k}}}$.
This follows because for all $1 \leq i \leq k$, we have
$r|_{\pi_i}\sigma = u|_{\pi_i}$ and so
$\tup{r|_{\pi_i}}\sigma \itos \tup{\maxanf{u|_{\pi_i}}}$.
Together with \eqref{eq:ind}, this gives
$\Dh(t,\pito) \leq
\Cplx_{\langle \DTpar(\RR), \DTpar(\RR), \RR \rangle}(\tup{t})$,
and for confluent $\pito$ we also get
$\Dh(t,\pito) =
\Cplx_{\langle \DTpar(\RR), \DTpar(\RR), \RR \rangle}(\tup{t})$.
\qed
\end{proof}

\subsection{Proof of~\autoref{thm:nopar}}

\begin{proof}[of \autoref{thm:nopar}]
Let $\RR$ be a TRS such that for all rules $\ell \to r \in \RR$,
$|\msdcSym(r)| = 1$.
\smallskip

We prove part (a).
By construction of $\DTpar$, for $\ell \to r \in \RR$ we get
from $|\msdcSym(r)| = 1$ that $|\DTpar(\ell \to r)| = 1$.
We show that for all rules $\ell \to r \in \RR$,
$\DTpar(\ell \to r) = \{ \DT(\ell \to r) \}$
and fix $\ell \to r \in \RR$.
$|\msdcSym(r)| = 1$ implies that $\PosDef(r)$ is ordered by
the prefix order $>$ on positions.
Thus, by using an arbitrary total extension of $>$
as the total order $\gtrdot$ used as an ingredient for the
construction of $\DT(\ell \to r)$, we obtain the result for
part (a).

\smallskip

We now prove part (b).
Let $t_0$ be a basic term for $\RR$ with a rewrite sequence
$t_0 \pito t_1 \pito t_2 \pito \dots$.
We show by induction over $i$ that for all $t_i$,
$t_i$ contains at most one innermost redex.

For the base case, consider that the basic term $t_0$
contains only a single occurrence of a defined symbol,
at the root.
Thus, if $t_0$ is a redex, it is also the unique innermost
redex in $t_0$.

For the induction step, assume that $t_i$ has at most one innermost redex.
If $t_i$ has no redex, it is a normal form, and we are done.
Otherwise, $t_i$ has exactly one innermost redex at position $\tau$,
and in the parallel-innermost rewrite step $t_i \pito t_{i+1}$
a rule $\ell \to r$ with matcher $\sigma$ replaces $t_i|_\tau = \sigma(\ell)$
by $\sigma(r)$. 
The premise $|\msdcSym(r)| = 1$ implies that there is exactly one
(empty or non-empty) maximal structural dependency chain
$\msdc{r}{\pi_1, \ldots, \pi_k}$.

Since the rewrite step $t_i \pito t_{i+1}$ uses
(parallel-)\emph{innermost} rewriting, $\sigma(x)$
is in normal form for all variables $x$.
Thus, potential redexes in term $t_{i+1}$ can only be at positions
$\tau \pi_1, \ldots, \tau \pi_k$.
As $\lmsdc \pi_1, \ldots, \pi_k \rmsdc$ is a structural
dependency chain, we have $\pi_1 > \dots > \pi_k$, which implies
$\tau \pi_1 > \dots > \tau \pi_k$. Thus, the term $t_{i+1}$ 
has at most one innermost redex $\tau \pi_i$.
This concludes part (b).
\smallskip

Part (c) follows directly from part (b) and the definitions of
$\pirc{\RR}(n)$ and $\irc{\RR}(n)$.

\end{proof}

\subsection{Proof of~\autoref{thm:upper_rel}}

For the proof of \autoref{thm:upper_rel}, we need the following
standard definition of a \emph{context}.

\begin{definition}[Context]
A \emph{context} $C[]$ is a term that contains exactly one occurrence of a
special symbol $\hole$. We write $C[t]$ for the term obtained from
replacing $\hole$ by the term $t$.
\end{definition}

\begin{proof}[of \autoref{thm:upper_rel}]
We first show part (a) of the statement.
For a DT Problem
$\langle \DD, \SSS, \RR \rangle$ and a term
$\tup{t} \in \SharpTerms$, consider an arbitrary
chain tree $T$. We will show that if $|T|_\SSS = n$, then also
$\Dh(\tup{t},\itodetup) \geq n$.
We consider two cases. First, $n = \omega$.
Since $T$ is finitely branching, there must be an infinite path
with infinitely many nodes
$
(\tup{u}_1 \to \FCom_{n_1}(\ldots,\tup{v_1},\ldots) \mid
\sigma_1),
(\tup{u}_2 \to \FCom_{n_2}(\ldots,\tup{v_2},\ldots) \mid
\sigma_2), \dots$ such that
$\tup{u}_1 \to \FCom_{n_1}(\ldots,\tup{v_1},\ldots),
\tup{u}_2 \to \FCom_{n_2}(\ldots,\tup{v_2},\ldots),
\ldots \in \DD$, for infinitely many $i_1 < i_2 < i_3 < \dots$,
we also have
$\tup{u}_i \to \FCom_{n_i}(\ldots,\tup{t_i},\ldots) \in \SSS$,
and for all $i$, we have
$\tup{v}_i\sigma_i \itos \tup{u}_{i+1}\sigma_{i+1}$.
Then we also have a corresponding infinite rewrite sequence
\begin{align*}
\tup{t} = \tup{u}_1\sigma_1 &\itosdetup
C_1[\tup{u}_{i_1}\sigma_{i_1}]
\someito{\SSS}
C_1[\FCom_{n_{i_1}}(\ldots,\tup{v}_{i_1},\ldots)\sigma_{i_1}]\\
&\someitos{(\DD\setminus\SSS)\cup\RR}\quad\;\;
C_2[\tup{u}_{i_2}\sigma_{i_2}]
\someito{\SSS}
C_2[\FCom_{n_{i_2}}(\ldots,\tup{v}_{i_2},\ldots)\sigma_{i_2}]\\
&\someitos{(\DD\setminus\SSS)\cup\RR}\quad\;\;
\dots
\end{align*}

\noindent
for some contexts $C_1,C_2,\ldots$ (which result from rewrite
steps with rules from $\DD$).

\medskip

Now consider the case $n \in \nat$. We use induction.
For $n = 0$, the statement trivially holds.
For the induction step, let $n > 0$.

The (potentially infinite)
chain tree $T$ has $m$ subtrees $T'_i$ with roots
$(\tup{u}_i \to \FCom_{q_i}(\tup{v}_{i,1}, \ldots, \tup{v}_{i,q_i})
\mid \sigma_i)$ such that
$\tup{u}_i \to \FCom_{q_i}(\tup{v}_{i,1}, \ldots, \tup{v}_{i,q_i}) \in
\SSS$ and the path in the chain tree from the root to
$(\tup{u}_i \to \FCom_{q_i}(\tup{v}_{i,1}, \ldots, \tup{v}_{i,q_i})
\mid \sigma_i)$
has no outgoing edges from a node with a DT in $\SSS$.

We show two separate statements in the induction step:

\begin{gather}
\text{For each $T_i$, the term $\tup{u}_i\sigma_i$ has
$\Dh(\tup{u}_i\sigma_i, \itodetup) \geq
|T'_i|_\SSS$.} \label{belowS}\\[1.5ex]
\begin{split}
\text{There are contexts $C_1,\ldots,C_m$ such that\qquad}\\[-3pt]
\tup{t} \itosdetup
\FCom_m(C_1[\tup{u_1}\sigma_1],\ldots,C_m[\tup{u_m}\sigma_m]).
\end{split}
\label{aboveS}
\end{gather}

Statements \eqref{belowS} and \eqref{aboveS} together imply
$\Dh(\tup{t}, \itodetup) \geq |T'_1|_\SSS + \dots + |T'_m|_\SSS  = n$.
\smallskip

On \eqref{belowS}:
Let $i \in \{1,\ldots,m\}$ be arbitrary and fixed, let
$u = u_i$, let $\sigma = \sigma_i$, let $T'=T'_i$ (to ease
notation).
So the root of $T'$ is
$(\tup{u} \to \FCom_{q}(\tup{v}_{1}, \ldots, \tup{v}_{q})
\mid \sigma)$.
Let this node have children
$N_1 = (\tup{w}_1 \to \FCom_{r_1}(\ldots) \mid \mu_1)$,
$\ldots$,
$N_q = (\tup{w}_q \to \FCom_{r_q}(\ldots) \mid \mu_q)$.
For the corresponding trees $T''_j$ with $N_j$ at the root,
we have $|T''_j|_\SSS < |T'|_\SSS \leq n$ by construction,
so the induction hypothesis is applicable to the terms
$\tup{w}_j\mu_j$, and we get
$\Dh(\tup{w}_j\mu_j, \itodetup) \geq |T''_j|_\SSS$ for all
$1 \leq j \leq q$. We construct a rewrite sequence with
$\itodetup$ using at least $1 + |T''_1|_\SSS + \dots +
|T''_q|_\SSS = |T'|_\SSS$ steps with a rule from $\SSS$ as follows:
\begin{align*}
\tup{u}\sigma &\someito{\SSS}
\FCom_q(\tup{v}_{1}\sigma, \ldots, \tup{v}_{q}\sigma)\\
& \someitos{\RR}
\FCom_q(\tup{w}_{1}\mu_1, \ldots, \tup{v}_{q}\sigma)\\
& \someitos{\RR}
\dots\\
& \someitos{\RR}
\FCom_q(\tup{w}_{1}\mu_1, \ldots, \tup{w}_{q}\mu_q)
\end{align*}

\noindent
With this rewrite sequence, we obtain \eqref{belowS} using the induction hypothesis:
\begin{align*}
&\quad\;\Dh(\tup{u}\sigma, \itodetup)\\
&\geq
1 + \Dh(\tup{w}_{1}\mu_1, \itodetup) + \dots +
\Dh(\tup{w}_{q}\mu_q, \itodetup)\\
&\geq 1 + |T''_1|_\SSS + \dots + |T''_q|_\SSS\\
&= |T'|_\SSS
\end{align*}

On \eqref{aboveS}:
Let the root of $T$ be
$(\tup{\ell} \to \FCom_p(\tup{r}_1,\ldots,\tup{r}_p) \mid \nu)$.
With a construction similar to the one used in the case
$n = \omega$, we get:
\begin{align*}
\tup{t} = \tup{\ell}\nu &\someito{\DD}\qquad\;\;\;
\FCom_p(\tup{r}_1\nu,\ldots,\tup{r}_p\nu)\\
&\someitos{(\DD\setminus\SSS)\cup\RR}
\FCom_p(C_1[\tup{u_1}\sigma_1],\ldots,\tup{r}_p\nu)\\
&\someitos{(\DD\setminus\SSS)\cup\RR}
\dots\\
&\someitos{(\DD\setminus\SSS)\cup\RR}
\FCom_p(C_1[\tup{u_1}\sigma_1],\ldots,C_m[\tup{u_m}\sigma_m])
\end{align*}

\noindent
for some contexts $C_1,\ldots,C_m$ (which result from rewrite
steps with rules from $\DD$). Note that here it suffices
to reduce only in those
subterms with a symbol $\tup{f}$ at their root
that are on a path to one of the
$C_i[\tup{u_i}\sigma_i]$, and depending on the tree structure,
each $\tup{r}_j\nu$ may yield 0 or more of these $m$ terms
(note that $p$ and $m$ are not necessarily equal).

This concludes the induction step and hence the overall proof
of part (a).
\smallskip

Part (b) follows from part (a), as shown in the following:

\begin{align*}
\irc{\langle \DD, \SSS, \RR \rangle}(n)
&= \sup \{ \Cplx_{\langle \DD, \SSS, \RR \rangle}(\tup{t}) \mid
   t \in \BasicTerms^\RR, \tsize{t} \leq n \} & \text{by \autoref{def:dt_problem}}\\
&\leq \sup \{ \Dh(\tup{t}, {\itodetup}) \mid
   t \in \BasicTerms^\RR, \tsize{t} \leq n \} & \text{by part (a)}\\
&\leq \sup \{ \Dh(s, {\itodetup}) \mid
   s \in \BasicTerms^{\RR\cup\DD}, \tsize{s} \leq n \}\\
&= \irc{\detupTRS}(n)
\end{align*}
\vspace*{-6ex}

\mbox{}\qed

\end{proof}

\subsection{Proof of~\autoref{thm:lower_rel}}

\begin{proof}[of \autoref{thm:lower_rel}]
We first consider the proof for part (a).

We use the following (many-sorted first-order
monomorphic) type assignment $\Theta$ with two sorts $\sbot$
and $\sDT$, where the arities of the symbols are respected (note that here all
arguments of a given symbol have the same type):
\begin{align*}
\Theta(f) & = \sbot \times \dots \times \sbot \to \sbot
              \text{ for $f$ in $\symsof{\RR}$}\\
\Theta(\tup{f}) & = \sbot \times \dots \times \sbot \to \sDT
                    \text{ for $\tup{f}$ a sharp symbol}\\
\Theta(\FCom_k) & = \sDT \times \dots \times \sDT \to \sDT
\end{align*}
With this type assignment, for all rules $\ell \to r \in \DD \cup
\RR$, $\ell$ and $r$ are well typed and have the same type: if $\ell
\to r \in \RR$, then all occurring symbols have the same result type
$\sbot$, which carries over to $\ell$ and $r$. And if $\ell \to r \in \DD$, then $\ell$ and $r$ have type
$\sDT$. To see that $\ell$ and $r$ are well typed, consider that every
term $\ell$ has the shape $\tup{f}(s_1,\ldots,s_n)$, where $\tup{f}$
has result type $\sDT$ and expects all arguments to have type $\sbot$,
while all $s_i$ contain only subterms of type $\sbot$. Similarly,
$r$ has the shape $\FCom_k(\tup{f}_1(t_{1,1},\ldots,t_{1,n_1}),
\ldots, \tup{f}_k(t_{k,1},\ldots,t_{k,n_k}))$.
$\FCom_k$ has result type $\sDT$ and expects all arguments to have
type $\sDT$. This is the case since all $\tup{f_i}$ have result type
$\sDT$. An all $\tup{f_i}$, which are right below the root, expect
their arguments $t_{i,j}$ to have result type $\sbot$. This is the
case by construction.

In the following, we consider basic terms that
are well typed according to $\Theta$ as start terms.
For our relative TRS $\SSS/((\DD\setminus\SSS)\cup\RR)$,
we have
the following
two kinds of well-typed basic terms that we need to consider:

\begin{description}
\item[Case 1:]
$t = f(t_1,\ldots,t_n)$ with $f \in \DefSyms$ and $t_1,\ldots,t_n
\in \TT(\ConSyms,\VV)$. This term and all its subterms are of type
$\sbot$. Thus, this term can be rewritten by rules from $\RR$, but not
by rules from $\DD$ (and $\SSS$), which all have type $\sDT$. As
rewriting preserves the type of terms, $t$ is a normal form wrt the relations
$\someito{\Theta(\SSS/((\DD\setminus\SSS)\cup\RR))}$ and
$\someito{\SSS/((\DD\setminus\SSS)\cup\RR)}$, and
$\Dh(t,\someito{\SSS/((\DD\setminus\SSS)\cup\RR)}) = 0$.
Since $\Cplx_{\langle \DD, \SSS, \RR \rangle}(s) \geq 0$
regardless of $s$, the claim follows for this case.

\medskip

\item[Case 2:]
$t = \tup{f}(t_1,\ldots,t_n)$ with $f \in \DefSyms$ and
$t_1,\ldots,t_n \in \TT(\ConSyms,\VV)$.
If $t$ is a normal form, there is no tree, and
$\Dh(t, \someito{\SSS/((\DD\setminus\SSS)\cup\RR)}) = 0 = \Cplx_{\langle \DD, \SSS, \RR \rangle}(t)$.

Otherwise, we can convert any ${\someito{\SSS \cup ((\DD\setminus\SSS)\cup\RR)}} =
{\someito{\DD \cup \RR}}$ rewrite
sequence to a $(\DD,\RR)$-chain tree $T$ for $t$ such that
$\Dh(t, \someito{\SSS/((\DD\setminus\SSS)\cup\RR)}) = |T|_\SSS$,
including any rewrite sequence that witnesses
$\Dh(t, \someito{\SSS/((\DD\setminus\SSS)\cup\RR)})$
in the following way:
\smallskip

As $t$ is a basic term, the first step in the rewrite sequence
rewrites at the root of the term. Since only rules from
$\DD$ are applicable to terms with $\tup{f}$ at the root,
this step uses a DT $\tup{s} \to \FCom_k(\ldots)$ from $\DD$.
With $\sigma$ as the used matcher for the rewrite step,
we obtain the root node $(\tup{s} \to \FCom_k(\ldots) \mid \sigma)$.
\smallskip

Now assume that we have a partially constructed chain tree $T'$ for the
rewrite sequence so far, which we have represented up until
the term $s$ that resulted from a $\someito{\DD}$ step.

If there are no further $\someito{\DD}$ steps in the rewrite sequence,
we have completed our chain tree $T = T'$ since the remaining $\someito{\RR}$
suffix of the rewrite sequence does not contribute
to $\Dh(t,\someito{\SSS/((\DD\setminus\SSS)\cup\RR)})$
(only steps using rules from $\SSS \subseteq \DD$ are counted).

Otherwise, our remaining rewrite sequence has the shape
$s \someitos{\RR} u \someito{\DD} v \someitom{\DD \cup \RR} \dots$
for some $m \in \nat \cup \{ \omega \}$.
The step $u \someito{\DD} v$ takes place at position $\pi$,
using the DT $\tup{p} \to \FCom_l(\tup{q}_1,
\ldots,\tup{q}_l) \in \DD$ and the matcher $\mu$.

We reorder the rewrite steps
$s \someitos{\RR} u$ by advancing all $\someito{\RR}$ steps
at positions $\tau > \pi$, yielding
$s \someitos{\RR,>\pi} s' \someitos{\RR,\not>\pi} u$.
Here $\someito{\RR,>\pi}$ denotes an innermost rewrite step
using rules from $\RR$ at a position $\tau > \pi$, and
$\someito{\RR,\not>\pi}$ denotes an innermost rewrite step
using rules from $\RR$ at a position $\tau' \not > \pi$.
Now we change our remaining rewrite sequence to
$s \someitos{\RR,>\pi} s' \someito{\DD} u' \someitos{\RR,\not>\pi} v \someitom{\DD \cup \RR} \dots$.
We encode the subsequence $s \someitos{\RR,>\pi} s' \someito{\DD} u'$
by adding the node $(\tup{p} \to \FCom_l(\tup{q}_1,
\ldots,\tup{q}_l) \mid \mu)$ to $T'$ as a child to
a node $N = (\tup{p'} \to \FCom_k(\tup{q'}_1,
\ldots,\tup{q'}_k) \mid \delta)$ where
$\FCom_k(\tup{q'}_1, \ldots,\tup{q'}_k) \delta = s|_\pi$
and where the subterm
$\FCom_k(\tup{q'}_1, \ldots,\tup{q'}_k)\delta$
in the DT of $N$
has not yet been used for this purpose in the
construction before.
That is, there is a $j$ such that $\tup{q'}_j \delta \someito{\RR} \tup{p}\mu$
that has not yet been used in the construction.
Such a node exists by construction.
We obtain the chain tree $T''$.

Now we can extend $T''$ further by encoding the rewrite sequence
$u' \someitos{\RR,\not>\pi} v \someitom{\DD \cup \RR} \dots$
following the same procedure.
Since our construction adds a node with a DT from $\SSS$ in
the first component of the label
whenever the rewrite sequence uses a rule from $\SSS$, we
have the desired property that
$\Dh(t, \someito{\SSS/((\DD\setminus\SSS)\cup\RR)}) = |T|_\SSS$.
This concludes the proof for part (a).

\bigskip

We now prove part (b).

Innermost runtime complexity is known to be a persistent
property wrt type introduction \cite{irc_persistent}.
For our relative TRS $\detupTRS$, this means that we may introduce
an arbitrary (many-sorted first-order monomorphic) type assignment
$\Theta$ for all symbols in the considered signature
such that the rules in $\RR\cup\DD$ are well typed.
We obtain a typed relative TRS $\Theta(\detupTRS)$, and
$\irc{\Theta(\detupTRS)}(n) = \irc{\detupTRS}(n)$
holds. Thus, only basic terms that are well typed according to
$\Theta$
need to be considered as start terms for $\irc{\detupTRS}$.
We write $\Theta(\BasicTerms^{\DD\cup\RR})$ for the set of well-typed
basic terms for $\detupTRS$.
\smallskip

We use the type assignment $\Theta$ from part (a)
to restrict the set of basic terms as start terms.
With this type assignment, we obtain:
\begin{align*}
&\phantom{{}={}}\irc{\detupTRS}(n)\\
&= \irc{\Theta(\detupTRS)}(n)
   & \text{by \cite{irc_persistent}}\\
&= \sup \{ \Dh(t, {\itodetup}) \mid
t \in \Theta(\BasicTerms^{\RR\cup\DD}), \tsize{t} \leq n \}
   & \text{by \autoref{def:rel}}\\
&\leq \sup \{ \Cplx_{\langle \DD, \SSS, \RR \rangle}(\tup{t}) \mid
t \in \Theta(\BasicTerms^{\RR\cup\DD}), \tsize{t} \leq n \}
   & \text{by part (a)}\\
&= \irc{\langle \DD, \SSS, \RR \rangle}(n)
\end{align*}

\vspace{-2ex}
\mbox{}\qed

\end{description}
\end{proof}

\end{document}